\documentclass[11pt]{article}

\usepackage[margin=1.1in]{geometry}
\usepackage[T1]{fontenc}
\usepackage{lmodern}
\usepackage{microtype}
\usepackage{amsmath,amssymb,amsthm,mathtools,amsfonts,latexsym,amscd}
\usepackage{booktabs}
\usepackage{graphicx}
\usepackage{authblk}
\usepackage{float}
\usepackage{enumitem}
\usepackage[dvipsnames]{xcolor}
\usepackage[numbers,sort&compress]{natbib}
\usepackage{aliascnt}
\usepackage[colorlinks=true,linkcolor=blue!55!black,citecolor=green!40!black,urlcolor=blue!55!black]{hyperref}
\usepackage[nameinlink]{cleveref}
\usepackage{longtable}
\usepackage{placeins}
\newtheorem{thm}{Theorem}[section]

\newaliascnt{prp}{thm}
\newtheorem{prp}[prp]{Proposition}
\aliascntresetthe{prp}

\newaliascnt{lm}{thm}
\newtheorem{lm}[lm]{Lemma}
\aliascntresetthe{lm}

\newaliascnt{cor}{thm}
\newtheorem{cor}[cor]{Corollary}
\aliascntresetthe{cor}

\theoremstyle{definition}

\newaliascnt{df}{thm}
\newtheorem{df}[df]{Definition}
\aliascntresetthe{df}

\newaliascnt{remrk}{thm}
\newtheorem{remrk}[remrk]{Remark}
\aliascntresetthe{remrk}

\newaliascnt{exmp}{thm}

\aliascntresetthe{exmp}

\newaliascnt{conj}{thm}

\aliascntresetthe{conj}

\crefname{thm}{Theorem}{Theorems}
\crefname{prp}{Proposition}{Propositions}
\crefname{lm}{Lemma}{Lemmas}
\crefname{cor}{Corollary}{Corollaries}
\crefname{df}{Definition}{Definitions}
\crefname{remrk}{Remark}{Remarks}
\crefname{exmp}{Example}{Examples}
\crefname{conj}{Conjecture}{Conjectures}
\Crefname{thm}{Theorem}{Theorems}
\Crefname{prp}{Proposition}{Propositions}
\Crefname{lm}{Lemma}{Lemmas}
\Crefname{cor}{Corollary}{Corollaries}
\Crefname{df}{Definition}{Definitions}
\Crefname{remrk}{Remark}{Remarks}
\Crefname{exmp}{Example}{Examples}
\Crefname{conj}{Conjecture}{Conjectures}
\Crefname{enumi}{Theorem}{Theorems}
\crefname{enumi}{Theorem}{Theorems}
\numberwithin{equation}{section}

\newcommand{\Z}{\mathbb{Z}}
\newcommand{\Zk}{\mathbb{Z}_k}
\newcommand{\Evec}{\vec E}
\newcommand{\phiflow}{\varphi}

\title{A QUBO Formulation for Nowhere-Zero $k$-Flows}

\author[1,2,*]{Ali Lotfi \texttt{all054@usask.ca}}
\author[1,3,*]{Adam Carter \texttt{adam.carter@usask.ca}}
\author[4]{Mohammad Meysami \texttt{mohammad-meysami@utulsa.edu}}
\author[1,2]{Thuan Ha \texttt{thuan.ha@usask.ca}}
\author[1,2]{Kwabena Abrefa Nketia \texttt{kwabena.nketia@usask.ca}}
\author[1,2,*]{Steven J. Shirtliffe \texttt{steve.shirtliffe@usask.ca}}
\author[5,6,*]{Steven Rayan \texttt{rayan@math.usask.ca}}

\affil[1]{\normalsize Department of Plant Sciences, University of Saskatchewan, Saskatoon, SK, Canada}

\affil[2]{\normalsize Nutrien Centre for Digital and Sustainable Agriculture, University of Saskatchewan, Saskatoon, SK, Canada}

\affil[3]{\normalsize Crop Development Centre, University of Saskatchewan, Saskatoon, SK, Canada}

\affil[4]{\normalsize Department of Mathematics, The University of Tulsa, Tulsa, OK, USA}

\affil[5]{\normalsize Centre for Quantum Topology and Its Applications (quanTA), University of Saskatchewan, Saskatoon, SK, Canada}

\affil[6]{\normalsize Department of Mathematics and Statistics, University of Saskatchewan, Saskatoon, SK, Canada}

\affil[*]{\normalsize\textbf{Corresponding Authors}}

\begin{document}
\maketitle

\begin{abstract}

\color{black}
We consider the encoding of graph problems as Quadratic Unconstrained Binary Optimization (QUBO) problems, which are solvable by either quantum or classical annealers. Yet, the class of problems encoded as QUBO problems has not previously included nowhere-zero flows. Nowhere-zero flows are related to Tutte's $5$-flow conjecture and appear in many contexts in graph theory. We provide an encoding of nowhere-zero flows as a QUBO Hamiltonian and prove the correctness of the construction. Our construction yields a Hamiltonian $H_{\mathrm{mod},k}$ whose ground state has zero energy if and only if the graph $G$ has a nowhere-zero $\Zk$-flow. By Tutte's equivalence theorem, zero ground energy is equivalent to $\phiflow(G)\le k$, and the zero-energy degeneracy is given by the flow polynomial $F(G;k)$. In particular, when the ground-state energy is zero, this is also the ground-state degeneracy. The construction uses one-hot variables to represent the edge flow residues modulo $k$ and auxiliary variables to represent the per-vertex modular quotient. We prove that the correctness of the construction is independent of the choice of orientation, root vertex, and positive penalty weights. We verify the construction on $59$ examples of graphs and values of $k$ that include both yes-instances and no-instances. We exhaustively sweep orientations and root choices on selected robustness
instances and test a finite suite of positive penalty weights. The resulting Hamiltonian is implemented using the \texttt{dimod.BinaryQuadraticModel} class, which is compatible with the D-Wave Ocean SDK. Quantum-hardware runs and claims about potential speedup using these devices are left to follow-up work.

\end{abstract}

\pagebreak

\section{Introduction}
\label{sec:intro}

{\color{black}
Nowhere-zero flows are among the oldest objects in algebraic graph theory. \citet{Tutte1954} introduced nowhere-zero flows as a generalisation of map colourings and conjectured that every bridgeless multigraph admits a $5$-flow. Whilst this conjecture is still open after seventy years, the best general result is Seymour's $6$-flow theorem \cite{Seymour1981}, with a recent short proof by \citet{DeVosNurse2025}. Jaeger showed that whenever $G$ is $4$-edge-connected, it admits a $4$-flow \cite{Jaeger1979}. More recently, \citet{LovaszThomassenWuZhang2013} showed that every $6$-edge-connected graph admits a nowhere-zero $3$-flow. Integer flows and cycle covers were treated by \citet{Zhang1997}; we follow the notation of \citet{Diestel2025} throughout. Fix an orientation of the edges of $G$. A nowhere-zero $\Zk$-flow assigns each edge a nonzero residue in $\{1,\ldots,k-1\}$ such that the signed sum at every vertex is $0\pmod{k}$. By Tutte's equivalence theorem, this is equivalent to the integer $k$-flow defined in \cref{df:kflow}.

Despite its importance in graph theory, nowhere-zero flows has so far remained outside of the quantum-optimisation toolkit. Lucas et al. \cite{Lucas2014} provided QUBO encodings of a variety of NP-hard problems on graphs, such as the Hamiltonian cycle, graph colouring, vertex cover, and set packing problems. Nowhere-zero flows is one that is absent from this catalogue. To our knowledge, no QUBO encoding of the problem has been published.

\citet{ChandrasekaranLiuRavi2025} consider minimum-cost nowhere-zero flows and cut-balanced orientations, proving hardness and inapproximability results for this optimisation problem. The feasibility problem represents the decision problem at the heart of the proof. Thus, these results motivate the study of binary encodings of nowhere-zero flows, which can be used with QUBO-compatible optimisation methods. Such methods include quantum annealing as proposed by \citet{KadowakiNishimori1998}, adiabatic optimisation as proposed by \citet{FarhiGoldstoneGutmannSipser2000}, the Quantum Approximate Optimization Algorithm as proposed by \citet{FarhiGoldstoneGutmann2014}, and related digital or classical annealing methods.

\citet{EsperetLagoutteMarseloo2025} recently introduced the flow reconfiguration graph $\mathcal{F}(G,k)$ and asked when it is connected. Their questions depend computationally on the ability to enumerate or sample nowhere-zero $\Zk$-flows, for which a QUBO formulation provides a search space whose zero-energy states correspond exactly to the flows. The closest existing quantum-algorithmic result in the Tutte setting is the work of \citet{AharonovAradEbanLandau2007}, who give polynomial-time quantum algorithms for additive approximation of the Tutte polynomial of planar graphs. Their setting is approximation on planar graphs, whereas the present paper concerns exact feasibility on arbitrary nonempty loopless multigraphs.

The difficulty with modular conservation is that it is not itself a quadratic constraint: the congruence $R_v\equiv 0\pmod k$ does not have a native form in the space of QUBOs. Instead, at each non-root vertex, we introduce an auxiliary block of binary variables representing the signed quotient, which allows the conservation constraint to be encoded as an exact integer equality and the whole penalty to be quadratic. Theorem~\ref{thm:modular-qubo} establishes a bijection between nowhere-zero $\Zk$-flows on $G$ and the zero-energy states of $H_{\mathrm{mod},k}$, relates zero ground energy to the decision problem $\phiflow(G)\le k$, and proves that the emptiness and cardinality of the set of zero-energy states are invariant under the choice of orientation, root vertex, and positive penalty weights. Furthermore, Corollary~\ref{cor:flow-poly} shows that the size of the set of zero-energy states is equal to the flow polynomial $F(G;k)$, and Proposition~\ref{prp:gap} gives a lower bound $\min(A,B)$ on positive energies.

We verify the construction on a base test set of 59 (G,k) configurations, which include simple graphs, multigraphs with parallel edges, and both yes- and no-instances for the existence of a k-flow, ranging from graphs as small as the complete graph $K_3$ up to the Petersen graph. Furthermore, we test robustness under the three encoding choices appearing in the theorem, including exhaustive tests of
all possible orientations of $K_4$, $K_{3,3}$, and $\Theta_3$, exhaustive tests of all possible roots of $K_{3,3}$ and
the Petersen graph, and a test of different penalty weights on $K_4$.

}

The claim of the paper is limited to the reduction itself. A zero-energy state that any solver (quantum or classical) returns is a witness to the fact that $\phiflow(G) \le k$. A solver that only returns positive energies is not a certificate of nonexistence of solutions. The Hamiltonian is not run on any quantum hardware in this work. No claims are made about the speedup enabled by using quantum hardware to evaluate the Hamiltonian. The experiments performed in this work are implementation checks for
correctness of the construction; the exact checks are supplemented by a
randomized non-flow test. The Tutte conjecture and the 4-flow conjecture of \citet{Tutte1966} remain open problems in graph theory, and are not addressed in this paper.

The construction extends the QUBO catalogue of \citet{Lucas2014} and \citet{GloverKochenbergerDu2019} by one entry. The one-hot scheme is shared with Lucas's QUBO for graph $k$-colouring. The novelty is the auxiliary signed-quotient block encoding modular conservation. The bridge obstruction (\Cref{prp:bridge}) is a classical result; we include a short proof for completeness because the same summation trick reappears in the QUBO correctness argument. In this paper, \Cref{sec:preliminaries} fixes notation, introduces orientations and signed incidence, proves the bridge obstruction, and reduces the integer formulation to its modular version via Tutte's theorem. \Cref{sec:qubo} constructs $H_{\mathrm{mod},k}$, proves the bijection between zero-energy states and nowhere-zero $\Zk$-flows, derives the flow-polynomial corollary, and gives a sharp lower bound on positive energies. \Cref{sec:numerical} reports the verification protocol, the correctness and robustness suite, and the empirical variable counts on representative graph families.

\section{Preliminaries}\label{sec:preliminaries}

\subsection{Graphs and oriented edges}

This section fixes notation and records the facts from flow theory used later. We follow Diestel~\cite{Diestel2025} throughout. The objects of 
study are multigraphs without loops.

\begin{df}[Loopless multigraph]
A loopless multigraph $G=(V,E)$ consists of a finite vertex set $V$ and 
a finite edge set $E$ together with an incidence map assigning to each 
$e \in E$ an unordered pair of distinct vertices, its endpoints. 
Parallel edges (distinct $e,e' \in E$ with the same endpoints) are 
permitted; loops (edges with equal endpoints) are not.
\end{df}

A flow assigns a signed value to each edge, and the sign depends on which 
end we measure from. To make this precise, we work with oriented edges. 
Each undirected edge has two oriented copies.

\begin{df}[Oriented edge]
If $e=xy$ is an undirected edge, its two oriented copies are $(e,x,y)$ 
and $(e,y,x)$. We write $\Evec$ for the set of all oriented edges of $G$.
\end{df}

We can now define the central object: a function on oriented edges that is
consistent with reversing direction and balances at every vertex.

\begin{df}[Circulation]
Let $H$ be an abelian group written additively. A function $f:\Evec\to H$
is a \emph{circulation} if it satisfies:
\begin{enumerate}[label=(F\arabic*),ref=F\arabic*]
  \item\label{cond:F1} $f(e,x,y)=-f(e,y,x)$ for every oriented edge $(e,x,y)$;
  \item\label{cond:F2} for every $v\in V$, $\displaystyle\sum_{(e,v,w)\in\Evec} f(e,v,w)=0$.
\end{enumerate}
\end{df}

The first condition is \emph{antisymmetry}: reversing the orientation of
an edge negates its value. The second is \emph{Kirchhoff conservation}:
the signed sum of flow values at every vertex is zero. These are
conditions~\eqref{cond:F1} and~\eqref{cond:F2} of Diestel~\cite[Ch.~6]{Diestel2025}.

A circulation may vanish on some edges. The nowhere-zero flows studied here do not.

\begin{df}[$H$-flow]
An \emph{$H$-flow} is a circulation $f:\Evec\to H$ with $f(\vec e)\neq 0$
for every $\vec e\in\Evec$.
\end{df}

The group $H$ can be anything, but the classical case takes $H=\Z$ and
bounds the magnitude of $f$.

\begin{df}[$k$-flow]
\label{df:kflow}
For an integer $k>1$, a \emph{$k$-flow} on $G$ is a circulation
$f:\Evec\to\Z$ satisfying
\[
  0<|f(\vec e)|<k \qquad\text{for every } \vec e\in\Evec.
\]
\end{df}

The \emph{flow number} $\phiflow(G)$ is the least $k$ for which $G$ has a
$k$-flow, and $\phiflow(G)=\infty$ if no such $k$ exists.

\subsection{Orientations and signed incidence}
\label{sec:orient-incidence}

The circulation $f:\Evec\to H$ is defined on both oriented copies of every
edge, but by~\eqref{cond:F1} its values on the two copies determine each
other: knowing $f$ on one copy determines it on the other. It is therefore
wasteful to store both, and for the QUBO constructions in later sections
we want one value per edge. We now make this reduction precise.

First, we need a name for a choice of one copy per edge.

\begin{df}[Orientation]
An \emph{orientation} of $G$ is a function $D:E\to\Evec$ with
$D(e)\in\{(e,x,y),(e,y,x)\}$ for every edge $e=xy$.
\end{df}

Equivalently, an orientation is a subset of $\Evec$ containing exactly one of the two oriented copies of each edge.

From now on, fix an orientation $D$ of $G$. For each edge $e\in E$, write $t(e)$ and $h(e)$ for the tail and head of $D(e)$; that is, $D(e)=(e,t(e),h(e))$. Storing one value per edge means working with the single triple $D(e)$ rather than both oriented copies, and we need to rewrite Kirchhoff conservation in those terms. The bookkeeping is done by the sign of how each edge meets a vertex.

\begin{df}[Signed incidence]
\label{df:signed-incidence}
The \emph{signed incidence} of edge $e\in E$ at vertex $v\in V$ is
\[
\sigma_{v,e}=
\begin{cases}
+1, & v=t(e),\\
-1, & v=h(e),\\
0, & \text{otherwise.}
\end{cases}
\]
\end{df}

We say $e$ is \emph{incident to} $v$ if $v\in\{t(e),h(e)\}$, equivalently
if $\sigma_{v,e}\ne 0$. So $\sigma_{v,e}$ is $+1$ if $e$ leaves $v$, $-1$
if $e$ enters $v$, and $0$ if $e$ is not incident to $v$. We can now
rewrite Kirchhoff.

\begin{prp}[Fixed-orientation form of Kirchhoff]
\label{prp:fixed-orient}
Let $f:\Evec\to H$ be a circulation and $D$ an orientation of $G$. Then
\[
  \sum_{e\in E}\sigma_{v,e}\, f(D(e)) \;=\; 0
  \qquad\text{for every } v\in V.
\]
\end{prp}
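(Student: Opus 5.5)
The plan is to rewrite the Kirchhoff sum in \eqref{cond:F2}, which runs over oriented edges leaving $v$, as a sum over the underlying edges $e\in E$, and then use antisymmetry \eqref{cond:F1} to express each term through the chosen copy $D(e)$. Fix $v\in V$. Since $G$ is loopless, every oriented edge of the form $(e,v,w)$ has $w\neq v$, and $e$ is incident to $v$. Conversely, each edge $e$ incident to $v$ contributes exactly one oriented copy with first vertex $v$: its endpoints are distinct, so exactly one of $(e,t(e),h(e))$ and $(e,h(e),t(e))$ starts at $v$. This gives a bijection between the index set $\{(e,v,w)\in\Evec\}$ of \eqref{cond:F2} and the set of edges $e$ with $\sigma_{v,e}\neq 0$.

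Next I will check, edge by edge, that the term $f(e,v,w)$ equals $\sigma_{v,e}\,f(D(e))$.
\begin{itemize}
\item If $v=t(e)$, the copy leaving $v$ is $D(e)$ itself, so the term is $f(D(e))=\sigma_{v,e}f(D(e))$ because $\sigma_{v,e}=+1$.
\item If $v=h(e)$, the copy leaving $v$ is $(e,h(e),t(e))$. By \eqref{cond:F1} its value is $-f(D(e))=\sigma_{v,e}f(D(e))$, since $\sigma_{v,e}=-1$.
\item Edges not incident to $v$ contribute nothing to \eqref{cond:F2}, and they also contribute $\sigma_{v,e}f(D(e))=0$.
\end{itemize}
Here $\pm1$ times a group element is read in the usual $\Z$-module sense, so these expressions make sense for an arbitrary abelian group $H$.

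Summing over $e\in E$ then gives
\[
\sum_{e\in E}\sigma_{v,e}f(D(e))=\sum_{(e,v,w)\in\Evec}f(e,v,w)=0,
\]
where the last equality is \eqref{cond:F2}.

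The only delicate point is the bijection in the first step, namely that no edge is counted twice or omitted. This relies on looplessness: a loop would have both its oriented copies leaving $v$, and its incidence $\sigma_{v,e}$ would be ill-defined. Parallel edges cause no difficulty, because oriented edges carry the edge label $e$ and the reindexing is therefore edge by edge.
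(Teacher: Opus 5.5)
Your proposal is correct and is essentially the paper's argument: the paper partitions $E=T_v\sqcup H_v\sqcup O_v$ by whether $v$ is the tail, the head, or neither endpoint of $e$. It notes that each incident edge has exactly one oriented copy starting at $v$ (namely $D(e)$ or its reverse), and applies \eqref{cond:F1} to the $H_v$ terms to rewrite the sum in \eqref{cond:F2} as $\sum_{e\in E}\sigma_{v,e}f(D(e))$. Your explicit remarks on looplessness and on reading $\pm1$ as $\Z$-module scalars are small clarifications that the paper leaves implicit.
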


\begin{proof}
Fix $v \in V$. Partition
\[
  E = T_v \sqcup H_v \sqcup O_v,
  \qquad
  T_v := \{e : t(e) = v\}, \quad
  H_v := \{e : h(e) = v\}, \quad
  O_v := \{e : v \notin \{t(e), h(e)\}\},
\]
so that
\[
  \sigma_{v,e} =
  \begin{cases} +1, & e \in T_v, \\ -1, & e \in H_v, \\ 0, & e \in O_v. \end{cases}
\]
For each edge $e$ incident to $v$, exactly one triple in $\Evec$ has $v$ in the tail slot:
\[
  e \in T_v \;\Longrightarrow\; (e, v, h(e)) = D(e),
  \qquad
  e \in H_v \;\Longrightarrow\; (e, v, t(e)) = \text{reverse of } D(e).
\]
Therefore
\begin{align*}
0
\;&\stackrel{\eqref{cond:F2}}{=}\;
  \sum_{(e,v,w) \in \Evec} f(e, v, w) \\
\;&=\;
  \sum_{e \in T_v} f(e, v, h(e))
  \;+\;
  \sum_{e \in H_v} f(e, v, t(e)) \\
\;&\stackrel{\eqref{cond:F1}}{=}\;
  \sum_{e \in T_v} f(e, v, h(e))
  \;-\;
  \sum_{e \in H_v} f(e, t(e), v) \\
\;&=\;
  \sum_{e \in T_v} f(D(e))
  \;-\;
  \sum_{e \in H_v} f(D(e)) \\
\;&=\;
  \sum_{e \in T_v} \sigma_{v,e}\, f(D(e))
  \;+\;
  \sum_{e \in H_v} \sigma_{v,e}\, f(D(e))
  \;+\;
  \sum_{e \in O_v} \sigma_{v,e}\, f(D(e)) \\
\;&=\;
  \sum_{e \in E} \sigma_{v,e}\, f(D(e)). \qedhere
\end{align*}
\end{proof}

\Cref{prp:fixed-orient} is the form of Kirchhoff we will use throughout
the rest of the paper. To keep the notation light, we write $f(e)$ for
$f(D(e))$ from now on; the two-orientation form $f(e,x,y)$ will not
appear again. For the fixed orientation $D$, we call the function $e\mapsto f(D(e))$
the \emph{circulation $f$ relative to $D$}. When this function is
nowhere zero on $E$, we call it the \emph{$H$-flow $f$ relative to $D$}.
Likewise, a nowhere-zero $\Zk$-flow or a $k$-flow relative to $D$ is a
function on $E$ obtained from a circulation on $\Evec$ by restricting to
the chosen orientation. For nowhere-zero $\Zk$-flows we always represent the nonzero residues by the set $\{1,\ldots,k-1\}\subset\Zk$. Equivalently, by \cref{prp:fixed-orient}, a
circulation relative to $D$ is a function $f:E\to H$ satisfying
\[
  \sum_{e\in E}\sigma_{v,e}\,f(e)=0
  \qquad\text{for every } v\in V.
\]

Having fixed $D$, one should ask whether the choice matters. Could
reversing some edges change whether $G$ admits a nowhere-zero flow? The
next lemma shows it cannot: flow existence is invariant under any
subset of edge reversals. This is a standard fact, but we give the
proof in detail because the bookkeeping it establishes will reappear in
later arguments.

\begin{lm}[Orientation independence]
\label{lm:orient}
Let $G$ be a loopless multigraph and let $D,D'$ be two orientations of
$G$. Let
\[
S:=\{e\in E : D'(e)\text{ is the reverse of }D(e)\}.
\]
For any abelian group $H$ and any circulation $f$ on $G$ relative to $D$,
define
\[
(\Phi_{D\to D'}f)(e)=
\begin{cases}
-f(e), & e\in S,\\
f(e),  & e\notin S.
\end{cases}
\]
Then $\Phi_{D\to D'}$ is a bijection from the circulations of $G$
relative to $D$ onto the circulations of $G$ relative to $D'$. It
restricts to bijections on nowhere-zero $H$-flows and on nowhere-zero
$k$-flows. In particular, the existence of either kind of flow is
independent of the chosen orientation.
\end{lm}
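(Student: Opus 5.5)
The plan is to verify directly that $\Phi_{D\to D'}$ maps circulations to circulations, and then to exhibit an explicit inverse.

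First I relate the two signed incidences. Write $\sigma_{v,e}$ and $\sigma'_{v,e}$ for the signed incidences of \cref{df:signed-incidence} relative to $D$ and $D'$. For $e\in S$ the tail and head are swapped, so $\sigma'_{v,e}=-\sigma_{v,e}$ for every $v$. For $e\notin S$ we have $\sigma'_{v,e}=\sigma_{v,e}$. Writing $\epsilon_e=-1$ if $e\in S$ and $\epsilon_e=+1$ otherwise, this reads $\sigma'_{v,e}=\epsilon_e\sigma_{v,e}$. Likewise $(\Phi_{D\to D'} f)(e)=\epsilon_e f(e)$. The sign $\epsilon_e$ is an integer $\pm1$, so multiplication by it makes sense in any abelian group $H$. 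Since $\epsilon_e^2=1$, for every $v$ we get
\[
\sum_{e\in E}\sigma'_{v,e}\,(\Phi_{D\to D'}f)(e)=\sum_{e\in E}\epsilon_e^2\,\sigma_{v,e}f(e)=\sum_{e\in E}\sigma_{v,e}f(e)=0,
\]
using the relative-to-$D$ form of Kirchhoff from \cref{prp:fixed-orient}. Hence $\Phi_{D\to D'}f$ is a circulation relative to $D'$.

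Next, bijectivity. The set of edges reversed between $D'$ and $D$ is the same $S$, so $\Phi_{D'\to D}$ is given by the same signs $\epsilon_e$. It is therefore a map in the opposite direction, and since $\epsilon_e^2=1$ we have $\Phi_{D'\to D}\circ\Phi_{D\to D'}=\mathrm{id}$ and $\Phi_{D\to D'}\circ\Phi_{D'\to D}=\mathrm{id}$. This gives the bijection.

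For the restrictions, negation in $H$ is an automorphism, so $\epsilon_e f(e)\neq0$ if and only if $f(e)\neq0$. Thus nowhere-zero $H$-flows correspond to nowhere-zero $H$-flows. For $H=\Z$ we also have $|\epsilon_e f(e)|=|f(e)|$, so the bound $0<|f|<k$ is preserved, and $k$-flows correspond to $k$-flows. For $\Zk$ with the representatives $\{1,\ldots,k-1\}$, negation sends $r$ to $k-r$, which is still in $\{1,\ldots,k-1\}$. The existence statements follow because a bijection between two sets makes one nonempty exactly when the other is.

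The main obstacle is the bookkeeping in the first step, namely checking that a reversal flips $\sigma_{v,e}$ at both endpoints and leaves it zero elsewhere. I would handle this by the three-way case split $v=t(e)$, $v=h(e)$, $v$ not an endpoint, mirroring the proof of \cref{prp:fixed-orient}. Looplessness ensures $t(e)\neq h(e)$, so these cases are well defined.
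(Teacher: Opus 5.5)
Your proposal is correct and follows essentially the same route as the paper: flip $\sigma$ on $S$ to get $\sigma'$, check that Kirchhoff's sum is unchanged, and use the symmetry of $S$ to see that $\Phi_{D'\to D}$ is the inverse; then check that nonzeroness, the bound $0<|f|<k$, and the $\Zk$ representatives $\{1,\ldots,k-1\}$ are preserved. Your sign $\epsilon_e$ with $\epsilon_e^2=1$ simply packages the paper's split of the sum into $e\in S$ and $e\notin S$.
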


\begin{proof}
Write $\sigma$ and $\sigma'$ for the signed-incidence functions of
$D$ and $D'$, respectively. If $e\notin S$, then
\[
\sigma'_{u,e}=\sigma_{u,e}
\qquad\text{for all }u\in V.
\]
If $e\in S$, then
\[
\sigma'_{u,e}=-\sigma_{u,e}
\qquad\text{for all }u\in V.
\]
Let $f$ be a circulation relative to $D$ and set $f':=\Phi_{D\to D'}f$.
For every vertex $u\in V$,
\begin{align*}
\sum_{e\in E}\sigma'_{u,e}f'(e)
&=
\sum_{e\notin S}\sigma'_{u,e}f'(e)
+
\sum_{e\in S}\sigma'_{u,e}f'(e)\\
&=
\sum_{e\notin S}\sigma_{u,e}f(e)
+
\sum_{e\in S}(-\sigma_{u,e})(-f(e))\\
&=
\sum_{e\in E}\sigma_{u,e}f(e)\\
&=
0,
\end{align*}
so $f'$ is a circulation relative to $D'$.

If $f$ is nowhere zero, then $f'(e)$ is again nowhere zero on every edge,
so $\Phi_{D\to D'}$ restricts to a map on nowhere-zero $H$-flows. In the case $H=\Zk$, used throughout the rest of the paper, a reversed value $a$ in the representative set $\{1,\ldots,k-1\}$ is written as $k-a$.
For integer $k$-flows, the same edgewise sign change preserves the bound
\[
0<|f'(e)|=|f(e)|<k
\qquad\text{for all }e\in E,
\]
so $\Phi_{D\to D'}$ also maps nowhere-zero $k$-flows to nowhere-zero
$k$-flows.

Finally, the set $S$ is the same whether we pass from $D$ to $D'$ or
from $D'$ to $D$, so applying the construction twice restores the
original values. Thus
\[
\Phi_{D'\to D}\circ \Phi_{D\to D'}=\mathrm{id}
\qquad\text{and}\qquad
\Phi_{D\to D'}\circ \Phi_{D'\to D}=\mathrm{id}.
\]
Hence $\Phi_{D\to D'}$ and $\Phi_{D'\to D}$ are inverse bijections.
\end{proof}

\subsection{The bridge obstruction}

Before building any algorithm to find nowhere-zero flows, we should ask
when they exist at all. There is one local obstruction so simple it
rules the problem out instantly: a \emph{bridge}.

\begin{df}[Bridge]
A \emph{component} of $G$ is a maximal connected subgraph. An edge
$e\in E$ is a \emph{bridge} if deleting $e$ increases the number of
components of $G$.
\end{df}

Equivalently, a bridge is an edge whose removal splits $G$ into two nonempty vertex sets with no other edges between them. The following proposition shows that such an edge forbids every nowhere-zero flow:
no matter which group we pick, a bridge must carry value zero, which is exactly what nowhere-zero flows disallow.

\begin{prp}[Bridge obstruction]
\label{prp:bridge}
If $G$ has a bridge, then $G$ admits no nowhere-zero $H$-flow for any abelian group $H$. In particular $\phiflow(G)=\infty$.
\end{prp}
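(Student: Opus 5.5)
The plan is the standard cut-summation argument. I will fix an orientation $D$ and work with a circulation $f$ relative to $D$, as justified by \cref{prp:fixed-orient}. Let $e_0$ be a bridge with endpoints $x$ and $y$. Deleting $e_0$ increases the number of components, so in $G-e_0$ the endpoints $x$ and $y$ lie in different components. I take $X$ to be the vertex set of the component of $G-e_0$ containing $t(e_0)$. Then $t(e_0)\in X$ and $h(e_0)\notin X$, and no edge other than $e_0$ has exactly one endpoint in $X$: any such edge would lie in $G-e_0$ and join the component $X$ to a vertex outside it, contradicting maximality.

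The key step is to sum the fixed-orientation Kirchhoff identity of \cref{prp:fixed-orient} over all $v\in X$:
\[
  0=\sum_{v\in X}\sum_{e\in E}\sigma_{v,e}f(e)=\sum_{e\in E}\Bigl(\sum_{v\in X}\sigma_{v,e}\Bigr)f(e).
\]
I then evaluate the inner coefficient $\sum_{v\in X}\sigma_{v,e}$ edge by edge, splitting into three cases.
\begin{enumerate}[label=(\roman*)]
  \item If both endpoints of $e$ lie in $X$, the coefficient is $(+1)+(-1)=0$. This uses looplessness: $t(e)\neq h(e)$, so both signs genuinely appear.
  \item If neither endpoint lies in $X$, the coefficient is $0$.
  \item If exactly one endpoint lies in $X$, then $e=e_0$, and since $t(e_0)\in X$ and $h(e_0)\notin X$ the coefficient is $+1$.
\end{enumerate}
The sum therefore collapses to $0=f(e_0)$. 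So every circulation, in any abelian group $H$, vanishes on the bridge, and no nowhere-zero $H$-flow exists.

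For the final clause, I note that a $k$-flow is in particular a $\Z$-circulation that is nonzero on every edge. So $G$ has no $k$-flow for any $k>1$, and by definition $\phiflow(G)=\infty$.

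The only delicate point is the claim that $e_0$ is the unique edge crossing between $X$ and $V\setminus X$. It follows directly from $X$ being a component of $G-e_0$, but it needs to be stated carefully because $G$ may be a multigraph. In particular, $e_0$ has no parallel copy: a parallel edge would keep $x$ and $y$ connected after deleting $e_0$, so $e_0$ would not be a bridge. Everything else is routine bookkeeping, of the same kind as in \cref{prp:fixed-orient}.
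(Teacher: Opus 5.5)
Your proof is correct and follows the paper's argument. You sum the fixed-orientation Kirchhoff identity over the vertex set $X$ of one component of $G-e_0$, note that the coefficient $\sum_{v\in X}\sigma_{v,e}$ vanishes on every edge except the bridge, and conclude $f(e_0)=0$. Choosing $X$ to contain $t(e_0)$, so the coefficient is $+1$ rather than $\pm1$, and justifying both the uniqueness of the crossing edge and the clause $\phiflow(G)=\infty$, are minor refinements of the same proof.
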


\begin{proof}
The strategy is to sum Kirchhoff's equation over one side of the cut
induced by the bridge, then show that every term cancels except the
one coming from the bridge itself.

Let $e_0=xy$ be a bridge, and write $X$ for the vertex set of the
component of $G-e_0$ containing $x$ and $\bar X=V\setminus X$ for its
complement. Since $e_0$ is a bridge, it is the only edge of $G$ with
one endpoint in $X$ and the other in $\bar X$.

Let $f$ be any circulation on $G$. We sum Kirchhoff's equation
(\cref{prp:fixed-orient}) over all vertices in $X$. Since each individual Kirchhoff sum is zero, so is the total:
\[
  S := \sum_{v\in X}\,\sum_{e\in E}\sigma_{v,e}\, f(e) \;=\; 0.
\]
We evaluate $S$ by swapping the order of summation and grouping edges by how many endpoints they have in $X$:
\[
  S \;=\; \sum_{e\in E}\,f(e)\,\underbrace{\sum_{v\in X}\sigma_{v,e}}_{=: \,c_e}.
\]
The coefficient $c_e$ depends only on the edge:
\[
  c_e \;=\;
  \begin{cases}
    (+1) + (-1) \;=\; 0, & \text{if } t(e),h(e)\in X,\\
    0, & \text{if } t(e),h(e)\in\bar X,\\
    +1, & \text{if } t(e)\in X,\;h(e)\in\bar X,\\
    -1, & \text{if } t(e)\in\bar X,\;h(e)\in X.
  \end{cases}
\]
The only edge with an endpoint in each of $X$ and $\bar X$ is the bridge $e_0$. Every other $c_e$ is zero, so the sum collapses to a single term:
\[
  S \;=\; c_{e_0}\, f(e_0) \;=\; \pm f(e_0).
\]
Combined with $S=0$, this forces $f(e_0)=0$. Thus every circulation on
$G$ vanishes on the bridge $e_0$, and no nowhere-zero $H$-flow can exist.

\end{proof}

{\color{black}
A nowhere-zero flow can exist only on a bridgeless graph, so
$\phiflow(G)=\infty$ whenever $G$ has a bridge. The QUBO construction
in \cref{sec:qubo}, however, is well-defined for every loopless
multigraph, including disconnected and bridged graphs. On a no-instance,
its ground-state energy is strictly positive; see \cref{rem:scope}.

}

\subsection{The modular version}

The definition of $k$-flow requires searching over signed integers $\{\pm 1, \ldots, \pm(k-1)\}$ with conservation in $\Z$. For a QUBO encoding it is more convenient to drop signs altogether and work modulo $k$: a smaller search space and a cleaner match for binary variables. The reduction to modular form is classical, due to Tutte.

Write $\Zk = \{0, 1, \ldots, k-1\}$ for the cyclic group of integers modulo $k$.

\begin{df}[Nowhere-zero $\Zk$-flow]
\label{df:zk-flow}
A \emph{nowhere-zero $\Zk$-flow} on $G$ is a function
$f:E\to\{1,\ldots,k-1\}\subset\Zk$ satisfying
\[
  \sum_{e \in E} \sigma_{v,e}\, f(e) \;\equiv\; 0 \pmod{k}
  \qquad\text{for every } v \in V.
\]
\end{df}

Comparing with \cref{df:kflow}: the edge values now live in 
$\{1, \ldots, k-1\}$ rather than $\{\pm 1, \ldots, \pm(k-1)\}$, and 
Kirchhoff conservation is required only modulo $k$. This reduction 
loses nothing:

\begin{thm}[Tutte equivalence]
\label{thm:tutte}
A multigraph $G$ admits a $k$-flow if and only if it admits a nowhere-zero
$\Zk$-flow.
\end{thm}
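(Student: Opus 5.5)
The forward direction is immediate, and the converse is where the work lies. Given a $k$-flow $f$ relative to a fixed orientation $D$ (by \cref{lm:orient} the orientation is immaterial), reduce each value modulo $k$. Since $0<|f(e)|<k$, the residue $\bar f(e)$ is nonzero and lies in $\{1,\ldots,k-1\}$. Kirchhoff's equation over $\Z$ from \cref{prp:fixed-orient} reduces to the congruence of \cref{df:zk-flow}, so $\bar f$ is a nowhere-zero $\Zk$-flow.

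For the converse, start from a nowhere-zero $\Zk$-flow $g$ with values in $\{1,\ldots,k-1\}$. Among all integer functions $f:E\to\Z$ with $f(e)\equiv g(e)\pmod k$ and $0<|f(e)|<k$, each edge has exactly two admissible lifts, $g(e)$ and $g(e)-k$. Define the excess at a vertex $v$ by
\[
  R_v(f):=\sum_{e\in E}\sigma_{v,e}f(e).
\]
Each $R_v(f)$ is a multiple of $k$, and $\sum_{v\in V}R_v(f)=0$ because every edge contributes $+f(e)-f(e)$. Choose a lift $f$ minimising $\Psi(f):=\sum_{v}|R_v(f)|$. The goal is to show that $\Psi(f)=0$, which makes $f$ an integer $k$-flow.

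Suppose instead that $\Psi(f)>0$. Then some vertex $s$ has $R_s>0$; since the total excess is zero, some vertex has negative excess. Build an auxiliary digraph on $V$. For an edge $e$ with $f(e)>0$, draw an arc from $t(e)$ to $h(e)$; replacing $f(e)$ by $f(e)-k$ lowers $R_{t(e)}$ by $k$ and raises $R_{h(e)}$ by $k$. For an edge with $f(e)<0$, draw the arc from $h(e)$ to $t(e)$, corresponding to replacing $f(e)$ by $f(e)+k$. Let $X$ be the set of vertices reachable from $s$.

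If $X$ contains a vertex $w$ with $R_w<0$, take a directed path from $s$ to $w$ and flip every edge along it. Interior vertices of the path keep their excess, $R_s$ drops by $k$, and $R_w$ rises by $k$. Because both are multiples of $k$, $|R_s|$ and $|R_w|$ each fall by exactly $k$, so $\Psi$ strictly decreases. This contradicts minimality.

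Otherwise every vertex of $X$ has $R\ge 0$ and $s\in X$ has $R_s>0$, so $\sum_{v\in X}R_v>0$. Evaluate this sum as in the proof of \cref{prp:bridge}: it equals the signed sum of $f$ over edges crossing the cut. By closure of $X$ under arcs, every edge leaving $X$ in $D$ has $f(e)<0$, and every edge entering $X$ has $f(e)>0$. Each such edge therefore contributes negatively, so the sum is $\le 0$, a contradiction. Hence $\Psi(f)=0$.

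The main obstacle is this cut argument: getting the arc directions and sign conventions exactly consistent with $\sigma_{v,e}$, and remembering that all excesses are multiples of $k$, which is what makes each path flip an exact $k$-step decrease of $\Psi$.
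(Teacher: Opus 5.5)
Your proof is correct. The two lifts $\{g(e),g(e)-k\}$, the minimisation of $\sum_v|R_v(f)|$, the exact $2k$ decrease from flipping a path from a positive-excess vertex to a negative-excess one, and the cut contradiction (with arc directions consistent with $\sigma_{v,e}$) all hold up. The paper gives no argument of its own and only cites \citet[Theorem~6.3.3]{Diestel2025}, whose proof is this same minimal-excess augmenting-path argument, so your route coincides with the one the paper relies on.
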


\begin{proof}
See \citet[Theorem~6.3.3]{Diestel2025}.
\end{proof}

\Cref{thm:tutte} is the bridge from graph theory to QUBO. The integer
formulation has signed variables and integer conservation; the modular
formulation has nonnegative residue variables and conservation modulo
$k$. Our QUBO constructions in \cref{sec:qubo} encode the modular form
directly.

\section{A modular QUBO for nowhere-zero \texorpdfstring{$k$}{k}-flows}
\label{sec:qubo}

{\color{black}Here, we will construct a Quadratic Unconstrained Binary Optimization (QUBO) Hamiltonian $H_{\mathrm{mod},k}$ whose ground-state energy is zero if and only if $G$ admits a nowhere-zero $\Zk$-flow (\cref{thm:modular-qubo}); \cref{thm:modular-qubo}(i) identifies the zero-energy states explicitly with the flows. By \cref{thm:tutte}, this is equivalent to $\phiflow(G)\le k$. In the rest of the paper we focus on numerical evaluation.}

\subsection{Encoding the two constraints}

A nowhere-zero $\Zk$-flow is specified by two requirements
(\cref{df:zk-flow}): each edge carries a residue in $\{1,\ldots,k-1\}$, and
the signed sum of those residues at every vertex is zero modulo~$k$. A QUBO
minimizes a quadratic polynomial in binary variables, so we need to express
both requirements as vanishing quadratic penalties.

The first requirement is a one-hot condition: for each edge $e$, exactly one
of $k-1$ indicator variables is on. This is standard in QUBO, and a squared
penalty of the form $\bigl(\sum_a x_{e,a}-1\bigr)^2$ vanishes precisely when
the sum is $1$.

The second requirement is modular conservation, which QUBO cannot express
directly: binary variables range over integers, not residues. The standard route is to replace the congruence $R_v \equiv 0 \pmod{k}$ with the integer equality $R_v = k q_v$ for an auxiliary integer quotient $q_v$, then encode $q_v$ in binary. A squared penalty $(R_v - k q_v)^2$ vanishes
exactly when the congruence holds, provided $q_v$ is allowed to take every
integer value in a range wide enough to contain the true quotient.

\subsection{Binary variables and decoded values}

The QUBO uses two blocks of binary variables: one-hot indicators for edge values and binary encodings of per-vertex quotients. We use the orientation $D$ of $G$ fixed
in \cref{sec:orient-incidence}, with signed incidences $\sigma_{v,e}$ as
in \cref{df:signed-incidence}. The Hamiltonian's polynomial form depends on $D$ through these signs; by
\cref{lm:orient} and the correctness theorem below, the existence of a
zero-energy state will not.

\begin{df}[One-hot edge variables]
\label{df:xea}
For each edge $e\in E$ and each residue $a\in\{1,\ldots,k-1\}$, let
\[
  x_{e,a}\in\{0,1\}
\]
be a binary variable. Write
\[
  x \;:=\; \{\, x_{e,a} \;:\; e \in E,\; 1 \le a \le k-1 \,\}
\]
for the collection of all such variables.
\end{df}

The variable $x_{e,a}$ records whether the flow on edge $e$ takes value $a$. The one-hot penalty below enforces that exactly one such variable is on for each edge. Once the one-hot penalty is satisfied, the decoded flow
value on edge $e$ is
\[
  F_e(x) \;=\; \sum_{a=1}^{k-1} a\, x_{e,a}.
\]
Summing $F_e$ with signed incidence gives the per-vertex residual.

\begin{df}[Vertex residual]
\label{df:Rv}
For a vertex $v\in V$, the \emph{signed residual} at $v$ is
\[
  R_v(x) \;=\; \sum_{e\in E} \sigma_{v,e}\, F_e(x).
\]
\end{df}

If the one-hot penalties are satisfied, $R_v(x)$ equals the Kirchhoff sum
of the decoded flow at $v$. Our goal is to require $R_v(x)\equiv 0\pmod k$
at every vertex.

We need a range for the auxiliary integer $q_v$. For any $x$ satisfying
the one-hot constraints, each decoded value $F_e(x)$ lies in
$\{1,\ldots,k-1\}$; writing $d(v)$, the \emph{degree} of $v$, for the
number of edges incident to $v$, we
then have
\[
  -(k-1)\, d(v) \;\le\; R_v(x) \;\le\; (k-1)\, d(v).
\]
Hence if $R_v(x)$ is divisible by $k$, the quotient $q_v$ lies in the
integer interval $[L_v, U_v]$, where
\[
  L_v := \left\lceil -\tfrac{(k-1)d(v)}{k}\right\rceil,
  \qquad
  U_v := \left\lfloor \tfrac{(k-1)d(v)}{k}\right\rfloor.
\]

One conservation constraint is redundant. Summing the signed residuals over all vertices gives
\[
  \sum_{v\in V} R_v(x)
  \;=\; \sum_{v\in V}\sum_{e\in E} \sigma_{v,e}\, F_e(x)
  \;=\; \sum_{e\in E} F_e(x) \sum_{v\in V}\sigma_{v,e}
  \;=\; 0,
\]
since each edge contributes $+1$ at its tail and $-1$ at its head, and
this identity holds for every assignment $x$. Assume $V\ne\varnothing$
for the construction, choose an arbitrary root $r\in V$, and impose
conservation only on
\[
  V^\star := V\setminus\{r\}.
\]
Then $R_v\equiv 0\pmod k$ for every $v\in V^\star$ implies
$R_r\equiv 0\pmod k$. If $G$ is disconnected, one could instead omit
one conservation equation in each connected component; we use the
single-root convention to keep the notation uniform.

We encode each non-root quotient in binary. For each $v\in V^\star$, let
\[
  B_v \;:=\; \lceil \log_2(U_v - L_v + 1)\rceil,
\]
so that $B_v$ is the smallest number of bits needed to represent every
integer in $[L_v, U_v]$.

\begin{df}[Quotient variables]
\label{df:Mv}
For each non-root vertex $v\in V^\star$, introduce binary variables
$p_{v,0},\ldots,p_{v,B_v-1}\in\{0,1\}$ and define
\[
  M_v(p) \;=\; L_v + \sum_{b=0}^{B_v-1} 2^b\, p_{v,b}.
\]
Write
\[
  p \;:=\; \{\, p_{v,b} \;:\; v \in V^\star,\; 0 \le b \le B_v - 1 \,\}
\]
for the collection of all such variables.
\end{df}

The map $(p_{v,0},\ldots,p_{v,B_v-1})\mapsto M_v(p)$ is a bijection onto
$[L_v, L_v + 2^{B_v} - 1]$, which contains $[L_v, U_v]$ but may extend
slightly beyond $U_v$ when $U_v - L_v + 1$ is not a power of two. 

This overshoot is harmless. If $R_v(x) = k\, M_v(p)$ for any integer $M_v$ in the representable range, then $R_v(x)\equiv 0\pmod k$.

{\color{black}
\begin{remrk}[Isolated vertices]
\label{rem:isolated}
If $d(v)=0$, then $L_v=U_v=0$, so
\[
  B_v=\left\lceil \log_2(U_v-L_v+1)\right\rceil
      =\left\lceil \log_2 1\right\rceil
      =0.
\]
Thus no quotient bits are introduced at $v$. In this case the quantity
$M_v(p)$ is interpreted as the constant $L_v=0$, and
\[
  R_v(x)=\sum_{e\in E}\sigma_{v,e}F_e(x)=0
\]
because no edge is incident to $v$. Hence the conservation term at $v$
vanishes identically:
\[
  (R_v(x)-kM_v(p))^2=(0-0)^2=0.
\]
The construction therefore extends without modification to loopless
multigraphs with isolated vertices.
\end{remrk}
}

\subsection{The Hamiltonian}

We can now define the Hamiltonian.

\begin{df}[Modular flow Hamiltonian]
\label{df:Hmodk}
Fix an orientation $D$, a root $r$, and positive constants $A,B$. Define
\begin{equation}
\label{eq:Hmodk}
H_{\mathrm{mod},k}^{D,r,A,B}(x,p)
  \;:=\;
  A\sum_{e\in E}\Bigl(\sum_{a=1}^{k-1} x_{e,a} - 1\Bigr)^{\!2}
  \;+\;
  B\sum_{v\in V^\star}\bigl(R_v(x) - k\, M_v(p)\bigr)^{2}.
\end{equation}
When $D$, $r$, $A$, and $B$ are fixed, we suppress the superscript and write
\[
H_{\mathrm{mod},k}(x,p).
\]
\end{df}

The first sum is the one-hot penalty, one term per edge. The second is the
modular conservation penalty, one term per non-root vertex. Every term is
a nonnegative square scaled by a positive constant, so
\[
H_{\mathrm{mod},k}^{D,r,A,B}(x,p)\ge 0,
\]
and zero energy requires every term to vanish simultaneously.

{\color{black}
We are now ready to state the main result of the paper, \Cref{thm:modular-qubo}. It describes the zero-energy
set of $H_{\mathrm{mod},k}$ completely: part~(i) exhibits an explicit
bijection with the nowhere-zero $\Zk$-flows of $G$, part~(ii) reads that
bijection as a decision procedure for $\phiflow(G)\le k$, and part~(iii)
records that the emptiness and cardinality of the zero-energy set depend
only on $G$ and $k$, not on the encoding choices $D$, $r$, $A$, and~$B$.
\begin{thm}[Structure of the modular QUBO]
\label{thm:modular-qubo}
Let $G$ be a nonempty loopless multigraph and let $k>1$. For the chosen
orientation $D$, root $r$, and positive penalty weights $A,B$, define
\[
  \mathcal{F}_k^{D}(G)
  :=
  \{f:E\to\{1,\ldots,k-1\}: f \text{ is a nowhere-zero }\Zk\text{-flow on }G \text{ relative to }D\},
\]
and
\[
  \mathcal{Z}_k^{D,r,A,B}(G)
  :=
  \{(x,p):H_{\mathrm{mod},k}^{D,r,A,B}(x,p)=0\}.
\]
Then:
\begin{enumerate}[label=\textup{(\roman*)},ref=\thethm(\roman*)]
\item \label{thm:mq-bij}\textbf{Bijection.}
The map
\[
    \Psi_{D,r,A,B}:\mathcal{F}_k^{D}(G)\to\mathcal{Z}_k^{D,r,A,B}(G),\qquad
  f\mapsto (x,p),
\]
defined by
\[
  x_{e,a}=\mathbf{1}[a=f(e)]
  \quad\text{for } e\in E,\ 1\le a\le k-1,
\]
and by
\[
  M_v(p)=\frac{R_v(x)}{k}
  \quad\text{for } v\in V^\star,
\]
is a bijection.

\item \label{thm:mq-dec}\textbf{Decision equivalence.}
The ground-state energy of $H_{\mathrm{mod},k}^{D,r,A,B}$ is zero if and only if
\[
  \mathcal{F}_k^{D}(G)\neq\varnothing,
\]
equivalently if and only if
\[
  \phiflow(G)\le k.
\]

\item \label{thm:mq-ind}\textbf{Parameter independence of emptiness and cardinality.}
For every admissible choice of $D$, $r$, $A$, and $B$, the corresponding
zero-energy set is in bijection with $\mathcal{F}_k^{D}(G)$. Consequently,
the emptiness and cardinality of
\[
  \mathcal{Z}_k^{D,r,A,B}(G)
\]
depend only on $G$ and $k$.
\end{enumerate}
\end{thm}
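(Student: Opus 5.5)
The argument rests on the fact that $H_{\mathrm{mod},k}^{D,r,A,B}$ is a positively weighted sum of squares of integers. A state $(x,p)$ therefore has zero energy if and only if two conditions hold. First, $\sum_a x_{e,a}=1$ for every $e$. Second, $R_v(x)=k\,M_v(p)$ for every $v\in V^\star$. This characterization involves neither $A$ nor $B$, so the zero set depends on the weights only through their positivity. We use this description of $\mathcal{Z}_k^{D,r,A,B}(G)$ throughout.

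For (i) we first check that $\Psi_{D,r,A,B}$ is well defined. Given $f\in\mathcal{F}_k^D(G)$, set $x_{e,a}=\mathbf 1[a=f(e)]$. Then each edge block is one-hot and $F_e(x)=f(e)$, so $R_v(x)=\sum_e\sigma_{v,e}f(e)$. This is divisible by $k$ because $f$ is a $\Zk$-flow. The degree bound gives $R_v(x)/k\in[L_v,U_v]\subseteq[L_v,L_v+2^{B_v}-1]$. The binary map $p\mapsto M_v(p)$ is a bijection onto that interval, so there is a unique bit vector $p_v$ with $M_v(p)=R_v(x)/k$. For isolated vertices this is the empty vector, by \cref{rem:isolated}. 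The resulting state makes both penalty families vanish, hence lies in $\mathcal{Z}$.

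Injectivity holds because $f$ is recovered from $x$ as $f(e)=F_e(x)$.

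Surjectivity requires the most care, because of the root and the bit overshoot. Take $(x,p)\in\mathcal{Z}$. One-hotness shows that $f(e):=F_e(x)$ lies in $\{1,\dots,k-1\}$ and that $x$ is exactly the indicator encoding of $f$. For $v\in V^\star$, the identity $R_v(x)=kM_v(p)$ gives $R_v\equiv0\pmod k$; an overshoot value of $M_v$ is harmless here. At the root, the identity $\sum_{v}R_v(x)=0$ gives $R_r=-\sum_{v\ne r}R_v\equiv0$. Hence $f\in\mathcal{F}_k^D(G)$. Finally, $M_v(p)=R_v(x)/k$ is forced, and the binary map is injective, so $p$ coincides with the bits assigned by $\Psi$. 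Thus $\Psi(f)=(x,p)$.

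For (ii), $H\ge0$ on a finite state space, so the ground energy is zero exactly when $\mathcal{Z}\neq\varnothing$. By (i) this holds exactly when $\mathcal{F}_k^D(G)\ne\varnothing$. By \cref{lm:orient} with $H=\Zk$, this is independent of $D$ and amounts to the existence of a nowhere-zero $\Zk$-flow. \Cref{thm:tutte} converts that into the existence of a $k$-flow, and the existence of a $k$-flow is equivalent to $\phiflow(G)\le k$: any $k'$-flow with $k'\le k$ is also a $k$-flow.

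For (iii), (i) already gives $|\mathcal{Z}_k^{D,r,A,B}(G)|=|\mathcal{F}_k^D(G)|$ for every admissible $(D,r,A,B)$. The right-hand side does not involve $r$, $A$ or $B$. \Cref{lm:orient} supplies a bijection $\mathcal{F}_k^D(G)\to\mathcal{F}_k^{D'}(G)$ given by $a\mapsto k-a$ on reversed edges, so the count is also independent of $D$. Emptiness and cardinality therefore depend only on $G$ and $k$.

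The main obstacle is the surjectivity step. There one must check that omitting the root constraint and allowing bit overshoot neither admit spurious zero-energy states nor allow several $p$ per flow. Both are handled as above: the global identity $\sum_v R_v=0$ covers the root, and injectivity of the binary map together with exactness of the integer equation $R_v=kM_v$ rules out duplicate $p$.
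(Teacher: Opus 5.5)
Your proposal is correct and follows essentially the same route as the paper. You encode a flow by indicator bits plus the unique quotient bits, and decode a zero-energy state via one-hotness, recovering the root congruence from $\sum_{v}R_v(x)=0$; then (ii) follows from (i) with \cref{thm:tutte}, and (iii) from (i) with \cref{lm:orient}. Along the way you also spell out points the paper leaves implicit, such as injectivity via $f(e)=F_e(x)$ and why having a $k$-flow is equivalent to $\phiflow(G)\le k$.
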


\begin{proof}

For part (i), let $f\in\mathcal{F}_k^{D}(G)$. Define the edge variables by
\[
  x_{e,a}:=\mathbf{1}[a=f(e)].
\]
Then every one-hot term vanishes. Because $f$ is a nowhere-zero $\Zk$-flow,
each non-root residual $R_v(x)$ is divisible by $k$. For each
$v\in V^\star$, set
\[
  q_v:=R_v(x)/k.
\]
By the residual bound we showed above, each $q_v$ lies in $[L_v,U_v]$. By \cref{df:Mv}, there is a unique
quotient-bit pattern $p_v$ with $M_v(p)=q_v$. Hence
\[
  (x,p)\in\mathcal{Z}_k^{D,r,A,B}(G),
\]
so $\Psi_{D,r,A,B}$ is well defined.

Conversely, let $(x,p)\in\mathcal{Z}_k^{D,r,A,B}(G)$. Since the one-hot
block vanishes, each edge $e$ selects a unique residue
$f(e)\in\{1,\ldots,k-1\}$. Thus
\[
  f:E\to\{1,\ldots,k-1\}
\]
is well defined and nowhere zero. Since every conservation term vanishes,
\[
  R_v(x)=k\,M_v(p)
  \qquad (v\in V^\star),
\]
so
\[
  R_v(x)\equiv 0 \pmod{k}
  \qquad (v\in V^\star).
\]
Since
\[
\sum_{v\in V}R_v(x)=0
\]
for every binary assignment $x$, and since
\[
R_v(x)\equiv 0 \pmod{k}
\qquad\text{for all }v\in V^\star,
\]
it follows that
\[
R_r(x)\equiv 0 \pmod{k}.
\]
Hence
$f\in\mathcal{F}_k^{D}(G)$.

Moreover, for each $v\in V^\star$, the integer $R_v(x)/k$ lies in
$[L_v,U_v]$, and the binary map of \cref{df:Mv} assigns a unique bit
pattern to that value. Therefore the quotient bits are uniquely
determined by $f$, so
\[
  (x,p)=\Psi_{D,r,A,B}(f).
\]
This proves part (i).

Part (ii) is immediate from part (i) together with \cref{thm:tutte}.

For part (iii), fix an orientation $D$. By part (i), every admissible choice of root $r$ and positive penalty weights $A,B$ yields a zero-energy set in bijection with the same set $\mathcal{F}_k^{D}(G)$. It therefore remains only to compare different orientations. By \cref{lm:orient}, for any other orientation $D'$, negating the value on each edge where $D$ and $D'$ disagree gives a bijection
\[
\mathcal{F}_k^{D}(G)\longrightarrow \mathcal{F}_k^{D'}(G).
\]
Hence the emptiness and cardinality of the flow set are independent of $D$, and so the emptiness and cardinality of the corresponding zero-energy sets depend only on $G$ and $k$.
\end{proof}
Part~(i) is a bijection, so it transfers any count of nowhere-zero
$\Zk$-flows directly to a count of zero-energy states. Applying the
classical flow-polynomial theorem on the flow side gives the following.
\begin{cor}[Flow polynomial]
\label{cor:flow-poly}
For every admissible choice of $D$, $r$, $A$, and $B$,
\[
  \left|\mathcal{Z}_k^{D,r,A,B}(G)\right|
  =
  F(G;k)
  =
  (-1)^{|E|-|V|+c(G)}\,T_G(0,1-k),
\]
where $c(G)$ is the number of connected components of $G$ and $T_G$ is the Tutte polynomial of $G$.
\end{cor}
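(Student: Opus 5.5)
The corollary follows by chaining three equalities: one supplied by \cref{thm:modular-qubo}, one by definition, and one classical. By \cref{thm:modular-qubo}(i), for each admissible $D$, $r$, $A$, $B$ the map $\Psi_{D,r,A,B}$ is a bijection $\mathcal{F}_k^{D}(G)\to\mathcal{Z}_k^{D,r,A,B}(G)$. Hence $|\mathcal{Z}_k^{D,r,A,B}(G)|=|\mathcal{F}_k^{D}(G)|$, and it remains to identify $|\mathcal{F}_k^{D}(G)|$ with $F(G;k)$.

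Here I will take the flow polynomial $F(G;k)$ to be defined, as in Diestel, as the number of nowhere-zero $H$-flows for any abelian group $H$ of order $k$, and specialise to $H=\Zk$. The one bookkeeping step is to check that $\mathcal{F}_k^{D}(G)$ is exactly the set of such flows counted on the one-value-per-edge side. \Cref{prp:fixed-orient} and the surrounding discussion show that restriction to $D$ is a bijection from circulations $f:\Evec\to\Zk$ onto functions on $E$ satisfying the fixed-orientation Kirchhoff equations. Indeed, (F1) recovers the value on the reversed copy of each edge, and this extension satisfies (F2) by reading the proof of \cref{prp:fixed-orient} backwards. Restriction also preserves the nowhere-zero property, since $f(\vec e)\ne0$ iff $-f(\vec e)\ne0$. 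Finally, \cref{df:zk-flow} writes Kirchhoff as an integer congruence on representatives $\{1,\dots,k-1\}$, which is the same as equality in $\Zk$. So $|\mathcal{F}_k^{D}(G)|$ equals the number of nowhere-zero $\Zk$-flows on $G$, and this number is independent of $D$ by \cref{lm:orient}, consistent with \cref{thm:modular-qubo}(iii).

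The second equality, $F(G;k)=(-1)^{|E|-|V|+c(G)}T_G(0,1-k)$, is the classical Tutte evaluation, and I will cite it rather than reprove it. If a sketch is wanted, I would use deletion--contraction. For a non-loop, non-bridge edge $e$, inclusion--exclusion on whether $e$ carries $0$ gives $F(G)=F(G/e)-F(G-e)$. A bridge forces $F=0$ by \cref{prp:bridge}, and a loop multiplies the count by $k-1$. These recurrences agree with those of $(-1)^{|E|-|V|+c(G)}T_G(0,1-k)$, since the sign exponent is the cyclomatic number, which changes by one under deletion of a non-bridge and is unchanged under contraction. One then checks the base case of an edgeless graph, where both sides equal $1$.

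The main obstacle is only this recurrence check, and in particular the point that contraction can create loops even though $G$ itself is loopless. That is why the loop case must be handled inside the induction, and why I would prefer to cite the evaluation rather than redo it. The genuinely paper-specific content is already carried entirely by the bijection in \cref{thm:modular-qubo}(i).
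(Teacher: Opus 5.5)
Your proposal is correct and follows essentially the same route as the paper. Like the paper, you transfer the count through the bijection of \cref{thm:modular-qubo}(i) and then invoke the classical flow-polynomial theorem and the standard Tutte evaluation, which the paper simply cites from Diestel (Theorem~6.3.1 and Corollary~6.3.2). Your extra bookkeeping identifying $\mathcal{F}_k^{D}(G)$ with the restrictions of Diestel-style nowhere-zero $\Zk$-flows, and your deletion--contraction sketch (including the loops created by contraction), are correct elaborations of steps the paper leaves to the citation.
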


\begin{proof}
By \cref{thm:modular-qubo}(i),
\[
  \left|\mathcal{Z}_k^{D,r,A,B}(G)\right|
  =
    \left|\mathcal{F}_k^{D}(G)\right|.
\]
The classical flow-polynomial theorem identifies
$\left|\mathcal{F}_k^{D}(G)\right|$ with $F(G;k)$; see \citet[Theorem~6.3.1 and Corollary~6.3.2]{Diestel2025}. The Tutte-polynomial specialization is standard.
\end{proof}

Consequently, any classical lower bound on the number of nowhere-zero
group flows immediately becomes a lower bound on the zero-energy
degeneracy of $H_{\mathrm{mod},k}$.
\Cref{thm:modular-qubo} and \Cref{cor:flow-poly} describe the zero-energy
set. We now turn to the energies above it: the next result shows that a
positive energy cannot be arbitrarily small, but is bounded below
by $\min(A,B)$.
\begin{prp}[Energy gap]
\label{prp:gap}
For any assignment $(x,p)$ with
\[
  H_{\mathrm{mod},k}^{D,r,A,B}(x,p)>0,
\]
one has
\[
  H_{\mathrm{mod},k}^{D,r,A,B}(x,p)\ge \min(A,B).
\]
The bound is best possible in general, but not attained on every
instance.
\end{prp}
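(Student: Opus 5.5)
The plan is an integrality argument, followed by explicit examples for sharpness and non-attainment. Every term of \eqref{eq:Hmodk} is a positive weight times the square of an integer. Each one-hot term $\bigl(\sum_a x_{e,a}-1\bigr)^2$ is the square of an integer because the $x_{e,a}$ are binary. Each conservation term $(R_v(x)-kM_v(p))^2$ is also the square of an integer: $R_v(x)=\sum_e\sigma_{v,e}\sum_a a\,x_{e,a}$ has integer coefficients, and $M_v(p)=L_v+\sum_b 2^b p_{v,b}$ is an integer. If $d(v)=0$, \cref{rem:isolated} shows the term is identically $0$. Hence every term lies in $A\cdot\{0,1,4,9,\dots\}$ or in $B\cdot\{0,1,4,9,\dots\}$.

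For the lower bound, if $H_{\mathrm{mod},k}(x,p)>0$, then at least one term is nonzero, since every term is nonnegative. That term is at least $A\cdot 1$ if it is a one-hot term and at least $B\cdot 1$ if it is a conservation term. So it is at least $\min(A,B)$, and adding the remaining nonnegative terms gives $H_{\mathrm{mod},k}\ge\min(A,B)$.

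For sharpness, I would give small explicit instances on which both values $A$ and $B$ are attained, so that whichever of the two is smaller is attained. Take $G=\Theta_2$, two parallel edges $e_1,e_2$ from $u$ to $r$, with $k=3$. The zero-energy state $f\equiv(1,2)$ works, since $R_u=1+2=3$; setting $M_u=1$ gives energy $0$.
\begin{itemize}[label=$\circ$]
\item To attain $B$ with the one-hot terms intact: set $f(e_1)=f(e_2)=1$, so $R_u=2$. Choose $M_u$ with $kM_u$ at distance $1$ from $R_u$, namely $M_u=1$ with $3$, which is representable. The energy is $B(2-3)^2=B$.
\item To attain $A$: from a zero-energy state, turn off the indicator of one edge. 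This raises that one-hot term to $A$ but also shifts $R_v$ at both endpoints, which in general disturbs the conservation terms. In $\Theta_2$ the root endpoint is unconstrained, so only $R_u$ shifts, by $\pm f(e)$. I would then readjust $M_u$ so that $R_u-kM_u=0$. For example, clearing $e_2$ gives $R_u=1$, which is not divisible by $3$, so that choice fails; instead I would choose the edge and value carefully, or switch on a second indicator. Switching on $x_{e_1,2}$ in addition to $x_{e_1,1}$ gives $F_{e_1}=3$, so $R_u=3+2=5$. That is also not divisible by $3$.
\end{itemize}
The main obstacle is exactly this construction for $A$: finding a one-hot violation whose decoded residuals stay divisible by $k$ at every constrained vertex. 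Cleaner options are $k=2$ with an edge whose indicator is cleared and whose other endpoint is the root and has degree-compensating structure, or taking $\Theta_3$ with $k=2$. In $\Theta_3$, clearing one edge gives $R_u=2$, which is divisible by $2$, so $M_u=1$ yields energy exactly $A$. I would verify representability of $M_u$ in $[L_u,L_u+2^{B_u}-1]$ in each example.

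Finally, for "not attained on every instance," I would exhibit an instance where $\min(A,B)$ is never achieved. For example, take $A<B$ on a graph where every single one-hot violation forces a nonzero conservation defect, such as $K_2$ with one edge, $k=2$ and root $r$. The nonroot vertex $u$ has $L_u=U_u=0$, so $B_u=0$ and $M_u\equiv0$. Then $R_u=\pm F_e$, which is $\pm1$ when the edge is on and $0$ when it is off. If the edge is on, the energy is $B$; if it is off, it is $A+B$. So every positive energy is at least $B>A=\min(A,B)$, and the bound is not attained. I would close by noting this is consistent with \cref{prp:bridge}, since $K_2$ is a bridge and has no zero-energy state.
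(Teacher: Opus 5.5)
Your integrality argument for $H_{\mathrm{mod},k}\ge\min(A,B)$ is the same as the paper's. Your sharpness part, though exploratory, reaches a valid conclusion:
\begin{itemize}
\item energy $A$ is attained on $\Theta_3$ at $k=2$ by clearing one edge, since the two remaining on-edges give an even $R_u$ and $R_u/2\in\{-1,0,1\}$ lies in the representable range $[-1,2]$;
\item energy $B$ is attained on $\Theta_2$ at $k=3$ with $f=(1,1)$ and $M_u=1$.
\end{itemize}
You do not need $\Theta_2$. On $\Theta_3$ at $k=2$, turning all three edges on makes $R_u$ odd, and choosing $M_u$ with $|R_u-2M_u|=1$ (always possible in $[-1,2]$) gives energy exactly $B$. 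This single instance is the paper's witness for both cases. Note also that this instance has no zero-energy state, so ``start from a zero-energy state'' is not what you are actually doing there.

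The genuine gap is the ``not attained on every instance'' part, because your $K_2$ example is miscomputed. When the edge is off, $F_e=0$, so $R_u=0=kM_u$ and the conservation term vanishes. The energy is therefore $A$, not $A+B$. On $K_2$ with $k=2$ the positive energies are exactly $A$ (edge off) and $B$ (edge on), so $\min(A,B)$ \emph{is} attained for every choice of weights. The example shows the opposite of what you need.

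More generally, the all-zero edge assignment has $R_v=0$ at every vertex. Since $L_v\le 0\le U_v$, $M_v=0$ is representable, so this state always has energy exactly $A|E|$. Hence, with $A<B$ as in your setup, no single-edge graph can witness non-attainment. Your criterion ``every one-hot violation forces a conservation defect'' fails exactly at this state.

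A valid witness is the paper's: $K_3$ at $k=2$, stated there with $A=B=1$ and smallest positive energy $2$. You can check it by hand for all $A,B>0$, using that every vertex has degree $2$:
\begin{itemize}
\item With all edges on, each $R_v$ is even, so any nonzero defect $R_v-2M_v$ contributes at least $4B$.
\item Clearing one edge makes both its endpoints' residuals odd. At least one endpoint is non-root, so the cost is at least $A+B$.
\item Clearing two edges costs at least $2A+B$.
\item Clearing all three edges costs $3A$.
\end{itemize}
Hence the smallest positive energy is $\min(A+B,3A,4B)>\min(A,B)$.
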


\begin{proof}
In \cref{eq:Hmodk} each squared factor
\[
\Bigl(\sum_{a=1}^{k-1}x_{e,a}-1\Bigr)^2
\qquad\text{and}\qquad
\bigl(R_v(x)-kM_v(p)\bigr)^2
\]
is the square of an integer, hence a nonnegative integer, and therefore at least $1$ whenever it is nonzero. If $H_{\mathrm{mod},k}^{D,r,A,B}(x,p)>0$, at least one squared factor is nonzero. If it is a one-hot factor, the corresponding summand is at least $A$; if it is a conservation factor, the corresponding summand is at least $B$. All remaining summands are nonnegative, so in either case
\[
H_{\mathrm{mod},k}^{D,r,A,B}(x,p)\ge \min(A,B).
\]
The bound is attained, for example, on $\Theta_3$ at $k=2$: if $A\le B$, choose an assignment with exactly one violated one-hot factor and all conservation factors zero; if $B\le A$, choose an assignment with all one-hot factors zero and exactly one violated conservation factor. It is not attained on every instance; on $K_3$ at $k=2$ with $A=B=1$, exact enumeration gives smallest positive energy
\[
  2>1=\min(A,B).
\]
\end{proof}
}

\begin{remrk}[Scope of \cref{thm:modular-qubo}]
\label{rem:scope}
The hypotheses of \cref{thm:modular-qubo} require only that $G$ be a
nonempty loopless multigraph; connectedness and bridgelessness are not assumed.
The identity
\[
  \sum_{v\in V}R_v(x)=0
\]
holds for every orientation and every binary assignment $x$, because
each edge contributes $(+1)+(-1)=0$ to the total. Hence omitting one
root conservation equation is valid whether or not $G$ is connected.
A disconnected graph may still admit a nowhere-zero $\Zk$-flow if each component does. If $G$ has a bridge, or more generally if $G$ is a no-instance for the chosen value of $k$, the Hamiltonian remains well-defined and its ground-state energy is strictly positive.
\end{remrk}

\Cref{thm:modular-qubo} reduces the question $\phiflow(G)\le k$ to a
QUBO ground-state problem. The construction is uniform in $k$: each edge contributes $k-1$ one-hot variables, while each quotient block grows only logarithmically through $B_v$. In particular, doubling $k$ increases each $B_v$ by at most one bit. The next section validates \cref{thm:modular-qubo} by exact computation and quantifies the size of the resulting QUBO on representative
graph families.

\section{Numerical experiments}
\label{sec:numerical}

{\color{black}
We validate \cref{thm:modular-qubo} numerically and
characterize the Hamiltonian's size on benchmark graphs. \Cref{sec:protocol} defines the verification protocol: three implementation checks (C1)--(C3) covering forward encoding, sampled non-flow labellings, and direct enumeration within budget.
\Cref{sec:results} applies the protocol to a base test set of $59$
$(G,k)$ configurations across $16$ graphs, spanning simple graphs,
multigraphs with parallel edges, yes-instances, and no-instances.
\Cref{sec:results-robustness} then tests the parameter-independence of the emptiness and cardinality of the zero-energy set under the three theorem parameters $(A,B)$, $r$, and $D$ through exhaustive sweeps on $K_4$, $K_{3,3}$, $\Theta_3$, and the Petersen graph. 

\Cref{sec:results-vars} reports variable counts, coupler
counts, density, offset, and coefficient magnitudes on named
benchmarks, on $30$ random cubic and $33$ random $G(n,m)$ samples, and
on a dedicated snark family; see \cref{tab:snark-benchmarks}.
\Cref{sec:limitations} states the scope of these checks. Finally,
\cref{sec:results-heuristic} separates theorem-level invariance from
heuristic-solver behaviour, which can depend on $(A,B)$ and on the
instance. No quantum hardware was tested; a reference implementation
is available at \url{https://github.com/alilotfi90/nzflow-qubo}.

}

\subsection{Verification protocol}
\label{sec:protocol}

We implemented \cref{thm:modular-qubo} as a Python~3 package built
on the D-Wave Ocean SDK, with deterministic random seeds throughout.
The Hamiltonian is exposed as a \texttt{dimod.BinaryQuadraticModel},
which makes the construction directly compatible with D-Wave's
exact, simulated-annealing, and hardware backends. {\color{black}
We verified that the \texttt{neal} simulated-annealing sampler
recovers zero-energy flows on $K_4$ at $k=4$ and $K_{3,3}$ at $k=3$
with default hyperparameters. On the Petersen graph at $k=5$,
\texttt{neal} does not consistently locate zero-energy states at
default hyperparameters; sampler performance on that instance is
discussed separately in \cref{sec:results-heuristic}. 

The code builds $H_{\mathrm{mod},k}$ from the explicit formula in \cref{eq:Hmodk} and encodes a given $\Zk$-flow as the binary assignment $(x,p)$ predicted by the proof of \cref{thm:modular-qubo}. It also enumerates nowhere-zero $\Zk$-flows by two independent methods. The first is a spanning-tree extension following \citet[Ch.~6, Exer.~6]{Diestel2025}: iterate over the $k^{\beta}$ assignments of residues in $\Zk$ to the non-tree edges and propagate Kirchhoff conservation to the tree. Here $\beta:=|E|-|V|+c(G)$ is the cycle rank of $G$. The second is a direct enumerator that scans every labelling $f:E\to\{1,\ldots,k-1\}$ and tests modular conservation at every non-root vertex.}

For each $(G,k)$ configuration the script performs three checks: 
\textbf{(C1)} it encodes every nowhere-zero $\Zk$-flow on $G$ via 
the proof of \cref{thm:modular-qubo} and verifies 
$H_{\mathrm{mod},k}=0$; \textbf{(C2)} it samples random non-flow 
labellings and verifies the minimum energy over the quotient bits 
is positive; and (C3) when $(k-1)^{|E|}\le 2\times 10^7$, it directly enumerates all
edge labellings and verifies that the number of feasible labellings equals
the number of nowhere-zero $\Zk$-flows reported by the spanning-tree
verifier. By \cref{thm:modular-qubo}, these feasible labellings correspond
bijectively to zero-energy states. Any failure of these checks would indicate a mismatch between
the implementation and \cref{thm:modular-qubo} and would require
investigation.

\subsection{Theorem-correctness results}
\label{sec:results}

The principal test set comprises $59$ $(G,k)$ configurations
across $16$ graphs (complete and complete bipartite graphs, cycles, the cube $Q_3$, the Petersen graph, and several multigraph families with parallel edges), spanning $36$ simple-graph and $23$ multigraph configurations, and including both yes-instances and no-instances. For each graph we tested $k\in\{2,\ldots,k_{\max}\}$ where $k_{\max}\le 6$ is constrained
by the (C3) budget.

\textbf{Outcomes.} All $59$ pairs passed all applicable checks. C1 passed on the $49$ configurations with at least one enumerated flow and was vacuous on the $10$ no-flow configurations. C2 passed on all configurations with sampled non-flow labellings. Among the $58$ pairs for which C3 was within budget, the direct enumeration count matched the spanning-tree count in every case. The single C3 skip is the Petersen graph at $k=5$, where $4^{15}\approx 10^{9}$ exceeds the budget; all $240$ nowhere-zero $\Z_5$-flows still encoded to $H_{\mathrm{mod},k}=0$. The largest direct edge-labelling enumeration completed was $Q_3$ at $k=5$, examining $4^{12}\approx 1.7\times 10^7$ labellings and finding exactly $156$ feasible labellings, matching the verifier's $156$ nowhere-zero $\Z_5$-flows. By \Cref{thm:modular-qubo}(i), these correspond to $156$ zero-energy states. \Cref{tab:representative} reports representative cases.

\begin{table}[t]
\centering
\small
\caption{Representative exact flow counts and threshold instances. The flow counts are exact and use the cycle-space extension enumerator. Column C1 is marked for yes-instances where every exact flow encodes to zero energy under $H_{\mathrm{mod},k}$. Column \emph{C3 count} records the number of feasible edge labellings $f:E\to\{1,\ldots,k-1\}$ found by direct enumeration when within the budget $(k-1)^{|E|}\le 2\times 10^7$; by \Cref{thm:modular-qubo}, this equals the zero-energy-state count because the quotient bits are uniquely determined. A dash (--) in the C3 count column indicates that direct enumeration was
outside budget; a dash in C1 indicates that there were no enumerated
flows to check.}
\label{tab:representative}
\begin{tabular}{lrrrrrrr}
\toprule
Graph & $k$ & $|E|$ & $\beta$ & \#flows & C1 & vars & C3 count \\
\midrule
$K_3$ & 2 & 3 & 1 & 1 & yes & 7 & 1 \\
$K_4$ & 3 & 6 & 3 & 0 & -- & 21 & 0 \\
$K_4$ & 4 & 6 & 3 & 6 & yes & 27 & 6 \\
$K_{3,3}$ & 3 & 9 & 4 & 2 & yes & 33 & 2 \\
$\Theta_3$ & 3 & 3 & 2 & 2 & yes & 9 & 2 \\
Triangular prism & 3 & 9 & 4 & 0 & -- & 33 & 0 \\
Petersen & 3 & 15 & 6 & 0 & -- & 57 & 0 \\
Petersen & 4 & 15 & 6 & 0 & -- & 72 & 0 \\
Petersen & 5 & 15 & 6 & 240 & yes & 87 & -- \\
$K_4$ doubled & 3 & 12 & 9 & 176 & yes & 36 & 176 \\
$Q_3$ & 5 & 12 & 5 & 156 & yes & 69 & 156 \\
\bottomrule
\end{tabular}
\end{table}

\subsection{Robustness}
\label{sec:results-robustness}
\Cref{thm:modular-qubo}(i) gives a bijection between nowhere-zero $\Zk$-flows and zero-energy states, and \Cref{thm:modular-qubo}(iii) shows that the emptiness and cardinality of the zero-energy set are independent of the choice of orientation, root, and positive penalty weights. We test those three encoding choices here. On $K_4$ at $k=4$, six pairs of penalty weights
spanning two orders of magnitude in each direction
($(A,B)\in\{(1,1),(10,1),(1,10),(0.1,1),(1,0.1),(5,0.5)\}$) all
produced exactly $6$ zero-energy states matching the $6$ nowhere-zero
$\Z_4$-flows of $K_4$. On $K_{3,3}$ at $k=3$, the suite was rerun
with each of the $6$ vertices in turn as the chosen root, with
identical outcomes.

We exhaustively swept all $2^{6}=64$ orientations
of $K_4$ at $k=4$, all $2^{9}=512$ orientations of $K_{3,3}$ at
$k=3$, and all $2^{3}=8$ orientations of $\Theta_3$ at $k=3$; in
every one of these $584$ orientations the spanning-tree verifier
returned the predicted flow count exactly, confirming the
orientation-independence guaranteed by \cref{lm:orient}. All $10$ root choices on Petersen at $k=5$ produced $240$
nowhere-zero $\Z_5$-flows encoding to $H_{\mathrm{mod},k}=0$. Together these sweeps test the three parameter choices appearing in the theorem at the level reported in \cref{tab:robustness-summary}: weight and root sweeps compare zero-energy counts for the constructed BQM, while orientation sweeps compare the induced flow counts. No deviation was found.

\begin{table}[t]
\centering
\small
\caption{Exact robustness sweeps. Orientation sweeps preserve the exact nowhere-zero flow count. Root and weight sweeps preserve the exact zero-energy count of the constructed BQM, consistent with the theorem-level invariance of $H_{\mathrm{mod},k}$.}
\label{tab:robustness-summary}
\begin{tabular}{l l r r r}
\hline
Sweep & Graph & $k$ & cases & invariant count \\
\hline
orientations & $K_4$ & 4 & 64 & 6 (yes) \\
orientations & $K_{3,3}$ & 3 & 512 & 2 (yes) \\
orientations & $\Theta_3$ & 3 & 8 & 2 (yes) \\
roots & $K_{3,3}$ & 3 & 6 & 2 (yes) \\
roots & Petersen & 5 & 10 & 240 (yes) \\
weights & $K_4$ & 4 & 6 & 6 (yes) \\
\hline
\end{tabular}
\end{table}

\subsection{Empirical size and structure of the BQM}
\label{sec:results-vars}
{\color{black}This subsection records the size of $H_{\mathrm{mod},k}$ on the benchmark suite. We give the variable-count formula, expand the coupler bound, then read off measured values for named and random graph families.}
The total binary-variable count for $H_{\mathrm{mod},k}$ is
\begin{equation}
\label{eq:varcount}
N_k(G) \;=\; (k-1)|E| \;+\; \sum_{v\in V^\star} B_v,
\qquad B_v=\bigl\lceil \log_2(U_v-L_v+1)\bigr\rceil.
\end{equation}
The first term grows linearly in $|E|$ and proportionally to $k-1$.
The second term depends on the degree distribution of $G$ in a way
that is not captured by $|E|$ and $|V|$ alone. For irregular graphs
$N_k(G)$ depends on the choice of root $r$, since
$V^\star=V\setminus\{r\}$ excludes a different vertex for each
choice and $B_v$ depends on $d(v)$. Unless stated otherwise, all reported totals use $r$ as the highest-indexed vertex; on the irregular $G(n,m)$ samples below,
changing $r$ alters the total by at most two bits.
{\color{black}
\paragraph{Coupler counts, offset, and density.}
Variable count alone does not describe the size of a QUBO. Expanding
\cref{eq:Hmodk}, the one-hot block contributes at most
\[
  |E|\binom{k-1}{2}
\]
quadratic couplers, since each edge contributes a clique on its
$k-1$ residue variables. At a non-root vertex $v$, the conservation
block couples the $(k-1)d(v)$ incident edge-residue variables among
themselves, couples those edge-residue variables to the $B_v$ quotient
bits, and couples the quotient bits among themselves. Hence the number
of nonzero quadratic couplers is bounded by
\begin{equation}
\label{eq:couplerbound}
\#\mathrm{couplers}
\;\le\;
|E|\binom{k-1}{2}
+
\sum_{v\in V^\star}
\left[
\binom{(k-1)d(v)}{2}
+
(k-1)d(v)\,B_v
+
\binom{B_v}{2}
\right],
\end{equation}
with repeated couplers merging after expansion. The constant offset of
the binary quadratic model is
\begin{equation}
\label{eq:offset}
A|E| + Bk^2\sum_{v\in V^\star} L_v^2.
\end{equation}
Here the term $A|E|$ comes from the constant $+1$ in each
$\bigl(\sum_a x_{e,a}-1\bigr)^2$, and the term
$Bk^2\sum_{v\in V^\star}L_v^2$ comes from the constant part of
$\bigl(R_v(x)-kM_v(p)\bigr)^2$. We report the measured variable count,
quadratic-coupler count, density, and offset for representative
benchmarks in \cref{tab:qubo-structure} at $A=B=1$.

\begin{table}[t]
\centering
\small
\caption{Structural statistics of the binary quadratic models $H_{\mathrm{mod},k}$ at $A=B=1$. Density is the fraction of present quadratic couplers among all possible couplers on the logical graph. The column $\max|Q|$ reports the largest absolute final BQM coefficient after merging identical variable pairs across the one-hot and conservation blocks; the dynamic range divides this by the smallest nonzero quadratic coefficient.}
\label{tab:qubo-structure}
\begin{tabular}{lrrrrrrr}
\hline
Graph & $k$ & vars & couplers & density & offset & $\max |Q|$ & dyn. range \\
\hline
$\Theta_3$ & 3 & 9 & 36 & 1.000 & 39 & 144 & 72.0:1 \\
$K_4$ & 4 & 27 & 189 & 0.538 & 198 & 256 & 128.0:1 \\
$K_{3,3}$ & 3 & 33 & 174 & 0.330 & 189 & 144 & 72.0:1 \\
Triangular prism & 3 & 33 & 174 & 0.330 & 189 & 144 & 72.0:1 \\
$K_4$ doubled & 3 & 36 & 342 & 0.543 & 444 & 576 & 288.0:1 \\
Petersen & 4 & 72 & 558 & 0.218 & 591 & 256 & 128.0:1 \\
Petersen & 5 & 87 & 873 & 0.233 & 915 & 400 & 200.0:1 \\
$Q_3$ & 5 & 69 & 681 & 0.290 & 712 & 400 & 200.0:1 \\
\hline
\end{tabular}
\end{table}

\paragraph{Coefficient magnitudes.}
The one-hot block contributes quadratic coefficients of magnitude
\[
2A
\]
between distinct residue variables on the same edge. Expanding the
conservation block
\[
B\bigl(R_v(x)-kM_v(p)\bigr)^2
\]
produces three further families of quadratic coefficients. For distinct
one-hot variables at the same vertex, the coefficient magnitude is at
most
\[
  2B\,aa' \le 2B(k-1)^2.
\]
For a one-hot variable and a quotient bit, the magnitude is at most
\[
  2Bk\,a\,2^b \le 2Bk(k-1)2^{B_v-1}.
\]
For two distinct quotient bits, the magnitude is at most
\[
  2Bk^2\,2^{b+c} \le 2Bk^2\,2^{2(B_v-1)}.
\]
These three bounds are contributions from a single conservation block.
The final BQM coefficient on a pair is the sum of contributions from
every block in which the pair appears; in particular, a same-edge residue
pair $(x_{e,a}, x_{e,a'})$ receives the one-hot contribution $2A$ together
with both endpoints' conservation contributions when those endpoints are
non-root, giving a final coefficient up to $2A + 4B(k-1)^2$. In the
benchmark set the maximum $|Q|$ is dominated by the two-quotient-bit
contribution at a single vertex, which is unaffected by such merging.
Thus the conservation-block coefficient magnitudes grow quadratically in
$k$, exponentially in $B_v$, and linearly in $B$. The measured values
reported below are for $A=B=1$. In this benchmark set, the largest
measured coefficient occurs on the densest tested graph, $K_4$-doubled
at $k=3$, where $\max|Q|=576$ and the smallest nonzero quadratic
coefficient is $2$, giving a dynamic range of $288{:}1$. The Petersen
graph at $k=5$ is next, with $\max|Q|=400$ and a dynamic range
of $200{:}1$. For fixed-precision annealing hardware the BQM would have
to be normalized before solving, and alternative quotient encodings
remain a reasonable direction for future work.
}
\begin{table}[t]
\centering
\begin{tabular}{lrrrrrr}
\toprule
Graph & $n$ & $m$ & $\beta$ & $\phiflow$ & vars ($k$=4) & vars ($k$=5) \\
\midrule
$K_3$ & 3 & 3 & 1 & 2 & 13 & 16 \\
$K_4$ & 4 & 6 & 3 & 4 & 27 & 33 \\
$K_5$ & 5 & 10 & 6 & 2 & 42 & 52 \\
$K_{3,3}$ & 6 & 9 & 4 & 3 & 42 & 51 \\
$K_{3,4}$ & 7 & 12 & 6 & 3 & 54 & 66 \\
$C_5$ & 5 & 5 & 1 & 2 & 23 & 28 \\
$C_6$ & 6 & 6 & 1 & 2 & 28 & 34 \\
$Q_3$ & 8 & 12 & 5 & 3 & 57 & 69 \\
Petersen & 10 & 15 & 6 & 5 & 72 & 87 \\
$\mathrm{GP}(7,2)$ & 14 & 21 & 8 & 4 & 102 & 123 \\
$\mathrm{GP}(8,3)$ & 16 & 24 & 9 & 3 & 117 & 141 \\
\bottomrule
\end{tabular}
\caption{Exact flow numbers and variable counts for named benchmark graphs.}
\label{tab:benchmarks}
\end{table}

The named-benchmark family (\cref{tab:benchmarks}) ranges from
$K_3$ ($n=3,|E|=3$) to the generalised Petersen graph
$\mathrm{GP}(8,3)$ ($n=16,|E|=24$). Variable counts at $k=5$ range
from $16$ to $141$, with the Petersen graph itself contributing $87$.

On random connected bridgeless cubic graphs
(\cref{tab:random-cubic-summary}), $30$ samples spanning
$n\in\{6,\ldots,16\}$, every vertex has $d(v)=3$, so $L_v=-2$,
$U_v=2$, $B_v=3$ at both $k=4$ and $k=5$. The variable count
reduces to
\[
  N_4(G) = 3|E| + 3(|V|-1), \qquad N_5(G) = 4|E| + 3(|V|-1).
\]
Flow numbers in the sample fall in $\{3,4\}$. The threshold no-instances
in the present revision include Petersen at $k=3$ and $k=4$, together
with the triangular prism at $k=3$.

Snarks, the bridgeless cubic graphs that are not $3$-edge-colorable, are no-instances for $k\le 4$: for cubic graphs, $3$-edge-colorability is equivalent to admitting a nowhere-zero $4$-flow, and a nowhere-zero $3$-flow would induce a $4$-flow. They are exercised directly in \cref{tab:snark-benchmarks}. The suite comprises the Petersen graph and the Isaacs flower snarks $J_5$ ($20$ vertices) and $J_7$ ($28$ vertices), the latter members of an infinite parametric family; \texttt{flower\_snark(5)} is verified isomorphic to SageMath's canonical \texttt{graphs.FlowerSnark()}. These graphs are confirmed to be no-instances at $k=3$ and $k=4$ by a
combination of enumeration and structural reasoning: Petersen and $J_5$
are checked by exact enumeration, $J_7$ at $k=3$ is checked by exact
enumeration, and $J_7$ at $k=4$ follows from non-$3$-edge-colorability. The flower snark $J_5$ admits $16{,}200$ nowhere-zero $\Z_5$-flows, every one of which encodes to $H_{\mathrm{mod},k}=0$ (check~C1). A complete finite snark catalogue is not required for the correctness claim, and a systematic snark hardness survey remains future work.

\begin{table}[t]
\centering
\begin{tabular}{rrrrlrr}
\toprule
$n$ & $|E|$ & $\beta$ & samples & $\phiflow$ values & vars ($k=4$) & vars ($k=5$) \\
\midrule
6 & 9 & 4 & 5 & $\phiflow=3$ ($\times 3$), $\phiflow=4$ ($\times 2$) & 42 & 51 \\
8 & 12 & 5 & 5 & $\phiflow=3$ ($\times 1$), $\phiflow=4$ ($\times 4$) & 57 & 69 \\
10 & 15 & 6 & 5 & $\phiflow=3$ ($\times 2$), $\phiflow=4$ ($\times 3$) & 72 & 87 \\
12 & 18 & 7 & 5 & $\phiflow=4$ ($\times 5$) & 87 & 105 \\
14 & 21 & 8 & 5 & $\phiflow=4$ ($\times 5$) & 102 & 123 \\
16 & 24 & 9 & 5 & $\phiflow=4$ ($\times 5$) & 117 & 141 \\
\bottomrule
\end{tabular}
\caption{Random connected bridgeless cubic graphs, summarized by vertex count. Five samples were drawn at each $n$. Variable counts are constant within each row because every cubic vertex contributes the same number of auxiliary bits. No snarks occurred in this random cubic family; threshold no-instances such as Petersen at $k=3,4$ and the triangular prism at $k=3$ are listed separately in Table~\ref{tab:representative}.}
\label{tab:random-cubic-summary}
\end{table}

\begin{table}[t]
\centering
\begin{tabular}{rrrrlrr}
\toprule
$n$ & $|E|$ & $\beta$ & samples & $\phiflow$ values & vars ($k=4$) & vars ($k=5$) \\
\midrule
10 & 14 & 5 & 5 & $\phiflow=3$ ($\times 4$), $\phiflow=4$ ($\times 1$) & 64--66 & 79--80 \\
10 & 16 & 7 & 5 & $\phiflow=3$ ($\times 5$) & 71--73 & 88--90 \\
10 & 18 & 9 & 5 & $\phiflow=3$ ($\times 4$), $\phiflow=4$ ($\times 1$) & 79--81 & 99--100 \\
15 & 19 & 5 & 3 & $\phiflow=3$ ($\times 3$) & 89--91 & 109--110 \\
15 & 21 & 7 & 5 & $\phiflow=3$ ($\times 5$) & 98--99 & 119--121 \\
15 & 23 & 9 & 5 & $\phiflow=3$ ($\times 5$) & 104--106 & 129--130 \\
20 & 28 & 9 & 5 & $\phiflow=4$ ($\times 5$) & 133--134 & 161--163 \\
\bottomrule
\end{tabular}
\caption{Random $G(n,m)$ samples, summarized by $(n,m)$. Up to five samples were drawn at each pair and then filtered for connectivity and bridgelessness; some pairs retain fewer than five graphs after filtering. Variable-count ranges document that the auxiliary block depends on degree distribution, not just $(n,m)$.}
\label{tab:random-gnm-summary}
\end{table}

\begin{table}[H]
\centering
\small
\begin{tabular}{lrrrrrrrr}
\hline
Graph & $k$ & $n$ & $|E|$ & $\beta$ & \#flows & C1 & vars & couplers \\
\hline
Petersen & 3 & 10 & 15 & 6 & 0 & -- & 57 & 312 \\
Petersen & 4 & 10 & 15 & 6 & 0 & -- & 72 & 558 \\
Petersen & 5 & 10 & 15 & 6 & 240 & yes & 87 & 873 \\
Flower snark $J_5$ & 3 & 20 & 30 & 11 & 0 & -- & 117 & 657 \\
Flower snark $J_5$ & 4 & 20 & 30 & 11 & 0 & -- & 147 & 1173 \\
Flower snark $J_5$ & 5 & 20 & 30 & 11 & 16200 & yes & 177 & 1833 \\
Flower snark $J_7$ & 3 & 28 & 42 & 15 & 0 & -- & 165 & 933 \\
Flower snark $J_7$ & 4 & 28 & 42 & 15 & -- & -- & 207 & 1665 \\
Flower snark $J_7$ & 5 & 28 & 42 & 15 & -- & -- & 249 & 2601 \\
\hline
\end{tabular}
\caption{Snark benchmarks. The Petersen graph and the Isaacs flower
snarks $J_5$ (20 vertices) and $J_7$ (28 vertices) are connected,
bridgeless, cubic, girth at least $5$, and not $3$-edge-colorable;
\texttt{flower\_snark(5)} is verified isomorphic to SageMath's canonical
\texttt{graphs.FlowerSnark()}. Since for cubic graphs
$3$-edge-colorability is equivalent to admitting a nowhere-zero
$4$-flow, and a nowhere-zero $3$-flow would imply a nowhere-zero
$4$-flow, the $k\in\{3,4\}$ snark rows are no-instances. The $k=5$
rows probe the near-threshold case: Petersen and $J_5$ are confirmed
yes-instances by the enumerated counts in the table, while the $J_7,k=5$
enumeration is outside the stated budget. The column \#flows is the exact cycle-space enumerator count when the
enumeration is within budget; a dash (--) in the \#flows column marks
configurations whose enumeration $(k-1)^\beta$ exceeds the budget
$5\times 10^6$. The C1 column is marked yes when enumerated flows were
checked to encode to $H_{\mathrm{mod},k}=0$, and is marked -- when C1 is
vacuous or outside budget. The structural columns are exact for every
row. For $J_7,k=4$, the no-instance conclusion follows from
non-$3$-edge-colorability rather than from enumeration.}
\label{tab:snark-benchmarks}
\end{table}
\FloatBarrier

On $33$ random $G(n,m)$ samples at controlled cycle-rank~(\cref{tab:random-gnm-summary}), the variable count varies within
each $(n,m)$ pair. For instance, the $5$ samples at $(n,m)=(10,14)$
gave variable counts ranging from $64$ to $66$ at $k=4$. This
illustrates that the auxiliary block depends on the degree
distribution, not on $(n,m)$ alone. Across the $33$ samples, $26$
have $\phiflow=3$ and the remaining $7$ have $\phiflow=4$. Across both random families and the named benchmarks, the measured variable counts agree with \cref{eq:varcount}; variation within a fixed $(n,m)$ pair comes from the degree distribution through the values $B_v$.

\subsection{Scope and limitations}
\label{sec:limitations}

The theorem is proved analytically. The verification protocol provides exact implementation checks for (C1) and, within the enumeration budget, for (C3), while (C2) is a randomized non-flow check; several questions remain outside its scope. We mark them here. In the base 59-configuration theorem-correctness table, the C3 budget of $2\times 10^7$ edge labellings was exceeded only by Petersen at $k=5$, where C1 and C2 still passed. The snark table uses a separate budget of $5\times 10^6$ cycle-space assignments; the corresponding out-of-budget rows are marked by dashes in Table~7. No quantum hardware was tested: all energies reported are computed exactly from \cref{eq:Hmodk}. The behaviour of $H_{\mathrm{mod},k}$ under quantum or classical annealing, including embedding overhead on D-Wave Pegasus or Zephyr topologies and chain-strength selection, is left
for future work. The random-graph samples reach cycle-rank $\beta=9$
and vertex count $n=20$. These are exact correctness checks and
formulation-size measurements, not scaling claims. A reference
implementation is available at
\url{https://github.com/alilotfi90/nzflow-qubo}.

{\color{black}
\subsection{Theorem-level versus heuristic-level parameter sensitivity}
\label{sec:results-heuristic}

The parameter-independence in \Cref{thm:mq-ind} is a statement about the exact zero-energy set, not about the behaviour of a particular heuristic solver. Our exact penalty-weight sweep on $K_4$ at $k=4$
confirms the theorem-level statement: the six tested pairs
\[
(A,B)\in\{(1,1),(10,1),(1,10),(0.1,1),(1,0.1),(5,0.5)\}
\]
all produce the same ground-state manifold of six zero-energy states.

Heuristic samplers behave differently. Their success probability and
time-to-solution can depend strongly on the ratio $A/B$ and on the
instance itself. In our \texttt{neal} runs, $K_4$ at $k=4$ and
$K_{3,3}$ at $k=3$ reached zero energy readily across all tested
weight pairs. On the Petersen graph at $k=5$, by contrast,
\texttt{neal} stalled at strictly positive energies across the tested
grid
\[
  (A,B)\in\{1,3,5,10,50\}\times\{1,3,10,50\},
\]
with up to $5000$ reads and $50{,}000$ sweeps per run across all tested
random seeds. This behaviour does not contradict
\cref{thm:modular-qubo}; it reflects heuristic search difficulty on a
hard near-threshold instance. Determining the best-performing weight
ratios, annealing schedules, and hardware embeddings is a separate
solver-study question and is left for future work.
}
\section{Conclusion}
\label{sec:conclusion}

{\color{black}

In this work, we construct a QUBO Hamiltonian $H_{\mathrm{mod},k}$ representing the problem of finding nowhere-zero $\Zk$-flows on nonempty loopless multigraphs. We establish a bijection between the zero-energy states of $H_{\mathrm{mod},k}$ and flows on $G$ (\Cref{thm:mq-bij}), identify the number of zero-energy states with the flow polynomial $F(G;k)$ (\Cref{cor:flow-poly}), and show a lower bound of $\min(A,B)$ on the positive energies of $H_{\mathrm{mod},k}$ (\Cref{prp:gap}). We verify $H_{\mathrm{mod},k}$ on a test set of $59$ instances of $(G,k)$, covering yes-instances and no-instances, simple graphs, and multigraphs with parallel edges, and we separately test orientation, root, and penalty-weight choices on the robustness instances described in \Cref{sec:results-robustness}. We additionally benchmark the construction on snark families, namely the Petersen graph and the Isaacs flower snarks $J_5$ and $J_7$, confirming the expected no-instance behaviour for $k\le 4$ (\cref{tab:snark-benchmarks}).

This result should be read as a reduction, not as a statement about the performance of solvers. A zero-energy state of any solver whatsoever, regardless of the type of solver, indicates that $\phiflow(G)\le k$. If a solver returns only positive energies, this is not evidence that $\phiflow(G)>k$; establishing nonexistence requires the exact verifier described in \cref{sec:protocol} or an exhaustive ground-state enumeration of $H_{\mathrm{mod},k}$. Thus, within these stated limits, the Hamiltonian is independent of any particular solver, and the analyses performed here are entirely analytical and independent of any particular sampler.

Three follow-up directions are immediate. First, one can deploy the algorithm on an annealer to evaluate the success probability and time-to-solution of $H_{\mathrm{mod},k}$ on hard cubic instances, as compared to simulated annealing on the same Hamiltonian. Second, it would be of interest to study alternative encodings of $H_{\mathrm{mod},k}$, such as a signed-integer Hamiltonian $H_{\mathrm{int},k}$, or a specialised Hamiltonian $H_4$ for the case $k=4$. Third, the Hamiltonian $H_{\mathrm{mod},k}$ may be of interest to study in relation to the work of \citet{EsperetLagoutteMarseloo2025} on flow-reconfiguration; its samples of states with zero energy are candidates for the endpoints of a reconfiguration sequence within $\mathcal{F}(G,k)$. We leave these interesting directions for future work.
}

\bibliographystyle{plainnat}
\bibliography{references}

\end{document}